\tikzset{
  quantum_box/.style={
    rectangle, rounded corners=2mm, draw=blue, line width=0.8pt,
    fill=blue!20, inner sep=2mm, minimum width=1.8cm, minimum height=1cm, align=center
  },
  classical_box/.style={
    rectangle, rounded corners=2mm, draw=blue, line width=0.8pt, dotted,
    fill=blue!20, inner sep=2mm, minimum width=1.8cm, minimum height=1cm, align=center
  },
  quantum_state/.style={
    circle, draw=red, line width=0.8pt, fill=red!20,
    inner sep=1mm, minimum width=0.5cm, minimum height=0.5cm, align=center
  },
  classical_line/.style={
    draw=black, line width=0.8pt,   dotted, -Latex, shorten >=1mm, shorten <=1mm
  },
  classical_line_dotted/.style={
    classical_line, dotted
  },
  quantum_line/.style={
    draw=red, line width=0.8pt, -Latex, shorten >=1mm, shorten <=1mm
  },
  label_style/.style={font=\small}
}
\newtheorem{example}{Example}  
\newtheorem{theorem}{Theorem}
\newtheorem*{theoremcopy}{Theorem}
\newtheorem{Proposition}{Proposition}
\newtheorem{Definition}{Definition}
\title{Structure of quantum measurements implementable with one round of classical communication}
\author[1,2]{Arthur C. R. Dutra} 
\author[1,3,4]{Ties-A. Ohst\thanks{\texttt{Corresponding author, e-mail: ties.ohst@physics.uu.se}}}
\author[1]{Hai-Chau Nguyen}
\author[1]{Otfried Gühne}
\affil[1]{Naturwissenschaftlich-Technische Fakultät, 
Universität Siegen, Walter-Flex-Straße 3, 57068 Siegen, Germany}
\affil[2]{Instituto de Matemática, Estatística e Computação Científica,
Universidade Estadual de Campinas, Campinas, SP, Brazil}
\affil[3]{Department of Physics and Astronomy, Uppsala University, 75120 Uppsala, Sweden}
\affil[4]{Nordita, KTH Royal Institute of Technology and Stockholm University, 10691 Stockholm, Sweden}
\begin{document}
\maketitle
\begin{abstract}
        Measurements that can be implemented via local operations and classical communication (LOCC) constitute a class of operations that is available in future quantum networks in which parties 
        share entangled resource states. 
        We characterise the different classes of measurements implementable with LOCC, where communication is restricted to 
        a single round with a fixed direction.  
        In particular, using the framework of constrained separability problems, we provide a complete characterisation of the class of LOCC measurements that require one round of classical communication with a limit on the transmitted information. Furthermore, we show how to distinguish between adaptive and non-adaptive measurement strategies. Using our techniques, we present examples where the success probability of state discrimination depends on the direction of communication as well as on the message size. We also discuss explicit instances of state ensembles where 
        non-projective measurements provide an advantage and where adaptive measurement strategies lead to improved success rates when compared to all non-adaptive strategies.
\end{abstract}

\section{Introduction}

A wide range of quantum technologies involve two observers implementing measurements at distant locations~\cite{ekkert91, teleportation, remote_preparation, data_hiding}. Although arbitrary joint measurements across the parts can often be powerful, practical technological limitations typically restrict the parties to particular subclasses of joint measurements. 
The most famous of such measurement classes is implemented by Local Operations and Classical Communication (LOCC)~\cite{locc_review} in which the parties can perform arbitrary measurements in their local subsystem that may depend on previous measurement outcomes of the other parties. We will refer to them as LOCC measurements. 

This motivated efforts to characterise the set of LOCC measurements and its achievable advantages. However, LOCC measurements are notoriously difficult to handle computationally: the set is non-convex and the constraints are non-linear, complicating numerical approaches. As a result, most works either replace LOCC with a tractable convex relaxation such as positive partial transpose or separable measurements~\cite{ppt_relax_1,ppt_relax_2, discrimination_thiesis} or seek analytical solutions for specific problems~\cite{nonlocality_without_entanglement, hardy_bipartite,2x2_distinguishability,3x3_distinguishability,2x3_distinguishability}.

Beyond deciding whether a task is implementable by LOCC, we may further ask what classical resources are required to implement it within an LOCC protocol. This classical budget comprises how many communication rounds are needed, how many bits are sent in each round, and the measurement order. Questions of this type have also been recently addressed in the context of quantum state verification \cite{Yu2022}. From an engineering perspective, the use of classical resources can translate into concrete costs, such as higher latency and longer memory hold times. While there are results on the required number of rounds \cite{two_way_required, infinite_rounds,bound_discrimination, wild_unlimited_com}, a comprehensive account of the overall classical resources remains open.

Recently, a new technique for optimisation over bilinear forms \cite{berta_optimization} has been applied to similar problems in the detection of quantum memory and its role in quantum steering \cite{steering_certification, spontaneous_emissions, channel_discrimination,wild_unlimited_com}. In this work, we build upon this approach to construct a converging semidefinite program (SDP) hierarchy that can be used for an optimisation over one-round LOCC measurements with a fixed communication budget. These methods go beyond the standard approach of only demanding that all measurement operators are separable, which has often been used to describe all LOCC measurements \cite{locc_review, Duan2009, Matthews2009, Bandyopadhyay2015}; since our method can constrain the amount and direction of communication it even goes beyond the recent method for the characterisation of the whole set of one-round LOCC measurements introduced in \cite{wild_unlimited_com}.
Concretely, with our hierarchies we can expose the operational parameters of LOCC as explicit constraints, which allow us to tackle the following questions:
\begin{enumerate}[label=(\alph*)]
    \item Are multiple rounds of classical communication required?
    \item How many bits must be exchanged in the communication?
    \item Does the order of local measurements affect the performance? If so, who should measure first?
    \item Is adaptivity necessary for the second observer, that is, must their measurement depend on the first observer's outcome?
\end{enumerate}

As a case study, this paper focuses on the application of LOCC measurements to the problem of locally distinguishing quantum states~\cite{discrimination_locc_review}. 

This paper is structured as follows. In Section \ref{sec:state_dis} we give a brief introduction to the task of minimum-error state discrimination before we discuss LOCC measurements in Section \ref{sec:locc}. Specifically, we will provide complete characterisations of the set of general bipartite one-round LOCC measurements with fixed communication bounds in Section \ref{sec:1r_locc} and of the set of non-adaptive LOCC measurements in Section \ref{sec:na_meas}. In Section \ref{sec:examples}, our methods are applied to three different examples of state discrimination problems, respectively giving rise to particularly interesting answers to the questions (a)--(d) raised above. Our work is concluded in Section \ref{sec:conclusion} and explicit proofs of our two main theorems are provided in the appendices.

\section{Quantum State Discrimination}
\label{sec:state_dis}

The operational task we consider is \textit{minimum-error state discrimination}~\cite{Helstrom1976-wb, discrimination_review, discrimination_thiesis, bound_discrimination}. Here, a state $\varrho_{\lambda}$ described by a density matrix acting on a finite-dimensional Hilbert space is drawn from a finite set $\{\varrho^\lambda\}_{\lambda =1}^{n}$ of states together with associated prior probabilities $\{p_\lambda\}_{\lambda =1}^{n}$. The problem now lies in the construction of the optimal discrimination strategy, described by a measurement that is mathematically given by a POVM, i.e., a collection of positive semidefinite operators $\{M^{\lambda}\}_{\lambda = 1}^{n}$ such that $M^{\lambda} \succcurlyeq 0$ and $\sum_{\lambda=1}^{n} M^\lambda = \mathds{1}$ to determine the unknown parameter $\lambda$.

Given that the measurement is subject to some restriction described as $\{M^\lambda\}_{\lambda} \in \mathcal{M}$, where $\mathcal{M}$ some closed subset of all possible measurements, the maximal success probability in the state discrimination problem can be formulated as the optimisation problem
\begin{equation}
    \label{eq:state_discrimination}
    p^\text{succ}_{\mathcal{M}} = \underset{\{M^{\lambda}\} \in \mathcal{M}} {\textnormal{max}} \sum_{\lambda = 1}^{n} p _\lambda \Tr(M^{\lambda} \varrho_{\lambda})
\end{equation}
and the corresponding minimal error is given by $p^{\text{err}}_{\mathcal{M}} = 1- p^\text{succ}_{\mathcal{M}}$. It is well known, see for instance Ref.~\cite{discrimination_locc_review}, that in the unrestricted case, which $\mathcal{M}$ contains all possible POVMs, the value $p^\text{succ}_{\mathcal{M}}$ can be computed efficiently by a semidefinite program. For general subsets $\mathcal{M}$ however, the problem of determining $p^\text{succ}_{\mathcal{M}}$ may be difficult. 

Since the objective function of the minimum error state discrimination problem in Eq.~\eqref{eq:state_discrimination} is linear, it should be noted that one can always replace the closed and hence also compact set $\mathcal{M}$ of admissible measurements by its convex hull $\textnormal{conv}(\mathcal{M})$, i.e., 
\begin{equation}
    \underset{\{M^{\lambda}\} \in \mathcal{M}} {\textnormal{max}} \sum_{\lambda = 1}^{n} p _\lambda \Tr(M^{\lambda} \varrho_{\lambda}) = \underset{\{M^{\lambda}\} \in \textnormal{conv}(\mathcal{M})} {\textnormal{max}} \sum_{\lambda = 1}^{n} p _\lambda \Tr(M^{\lambda} \varrho_{\lambda})
\end{equation}
as the problem on the right-hand side is guaranteed to be solved optimally by an extreme point of $\textnormal{conv}(\mathcal{M})$ that is automatically an element of $\mathcal{M}$. In this way, the closedness of $\mathcal{M}$ and the linearity of the objective function allow us to formulate the task of finding the optimal success probability of discrimination $p^\text{succ}_{\mathcal{M}}$ as a convex optimisation problem.

\section{LOCC measurements}
\label{sec:locc}

\begin{figure}
    \centering
    \includegraphics[width=0.7\linewidth]{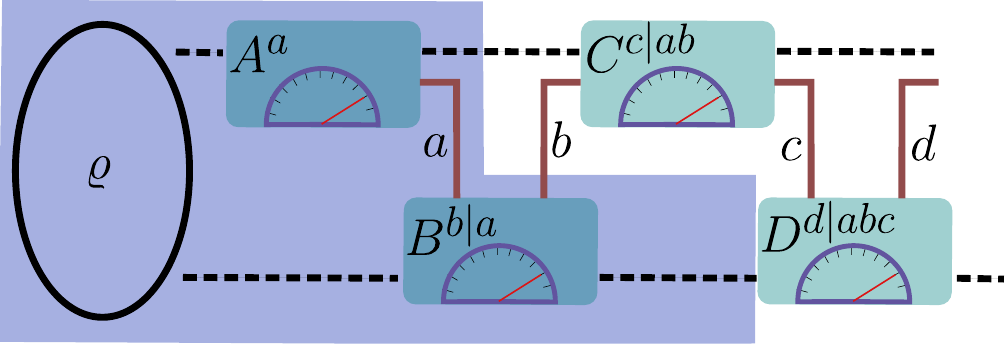}
    \caption{Schematic sketch of general bipartite LOCC measurements. The parties Alice and Bob perform local measurements described by instruments whose specific settings may depend on all preceding measurement results that are freely communicated between Alice and Bob. In this work, we consider the particular case of measurements that only require a single round of classical communication with messages of bounded size, as depicted in the shaded region of the figure.}
    \label{fig:locc_sketch}
\end{figure}

Following the presentation in Ref.~\cite{locc_review}, a bipartite measurement $\{M^{\lambda}\}_{\lambda=1}^{n}\subset \textnormal{Herm}(\mathbb{C}^{d_{\rm A}} \otimes \mathbb{C}^{d_{\rm B}})$ is a LOCC measurement if it can be realized by a sequence of local measurements mathematically described by so-called quantum instruments that may depend on all preceding measurement outcomes that are freely communicated between the parties; see Figure \ref{fig:locc_sketch}. Importantly, while the set of all LOCC measurements does not meet the requirement of being closed \cite{locc_review}, we consider here certain subsets of LOCC measurements that are. A well-known property that all $\rm{LOCC}$ measurements -- independent of the amount and number of rounds of classical communication -- share is the separability of the measurement operators, meaning that 
\begin{equation}
    M^\lambda = \sum_{r} q_{r\lambda} \, A^{r\lambda} \otimes B^{r\lambda},
\end{equation}
where $\{q_{r,\lambda}\}_{r}$ is a probability distribution for each $\lambda$ and $A^{r\lambda} \in \textnormal{Herm}(\mathbb{C}^{d_{\rm A}})$ and $B^{r\lambda} \in \textnormal{Herm}(\mathbb{C}^{d_{\rm B}})$ are any positive semidefinite operators. This
defines the set of separable measurements $\rm{SEP}$ that can be further relaxed to the set $\rm{PPT}$ consisting of measurements in which every measurement operator has positive semidefinite partial transpose \cite{Horodecki1996, Ghne2009}. The sets $\rm{SEP}$ and $\rm{PPT}$ are proper outer approximations of $\rm{LOCC}$ in the sense that there exist separable measurements without an LOCC implementation \cite{locc_review}. 

While the relaxations $\rm{SEP}$ and $\rm{PPT}$ give rise to useful necessary conditions for the characterisation of $\rm{LOCC}$ measurements, a remaining problem lies in the internal distinction between different classes of $\rm{LOCC}$ measurements in terms of their amounts and directions of communication and number of rounds. To overcome this issue, here we contribute a complete characterisation of $\rm{LOCC}$ measurements with a single round of classical communication and bounds on the transmitted communication. This will allow us to systematically study the role of classical communication in such measurements. 

\subsection{One-round LOCC measurements with communication bounds}
\label{sec:1r_locc}
A special class of LOCC measurements is given by one-round LOCC measurement in which Alice performs a local measurement $\{A^{a}\}_{a=1}^{m}$ whose outcome $a$ is subsequently communicated to Bob. Since the parties individually only perform a single measurement, it suffices to describe the local measurements by POVMs instead of using non-destructive instruments that would be needed for more than one round of classical communication. Dependent on the message $a$, Bob performs a measurement $\{B^{\lambda|a}\}_{\lambda=1}^{n}$ and the resulting outcome determines which state of the ensemble is guessed. Formally, this one-round strategy is defined as follows.  
\begin{Definition}
    A bipartite $\textnormal{POVM}$ $\{M^{\lambda}\}_{\lambda = 1}^{n} \subset \textnormal{Herm}(\mathbb{C}^{d_{\rm A}} \otimes \mathbb{C}^{d_{\rm B}})$ is called a one-round $\textnormal{LOCC}$ measurement if it can be decomposed as 
    \begin{equation}
        \label{eq:1-r-meas-op}
        M^{\lambda} = \sum_{a=1}^{m}  A^{a} \otimes B^{\lambda|a},
    \end{equation}
    where $\{A^{a}\}_{a=1}^{m}$ and $\{B^{\lambda|a}\}_{\lambda=1}^{n}$ are local $\textnormal{POVM}$'s in Alice's and Bob's systems, respectively, and $m \in \mathbb{N}$.  We define \(\mathrm{1R\text{-}LOCC}\) as the set of all such POVMs for any number of outcomes $n$ satisfying Eq.~\eqref{eq:1-r-meas-op}. For a fixed size $m$ of the message that Alice sends to Bob, we denote the set of one-round $\textnormal{LOCC}$ measurements with bounded communication as $1\textnormal{R-LOCC}_{m}$.
\end{Definition}

The sets of LOCC measurements form a nested family of subsets in terms of the number of rounds and amount of communication \cite{locc_review}
\begin{equation}
    \text{1R-LOCC}_m\subseteq\text{1R-LOCC} \subsetneq \text{LOCC} \subsetneq \text{SEP} \subseteq \text{PPT},
\end{equation}
for $d_{\rm A},d_{\rm B} \geq 2$, where the leftmost subset inclusion 
is reduced to equality for $m\geq n(d_{\rm A}d_{\rm B})^8$~\cite{locc_review}, while the rightmost inclusion between $\text{SEP}$ and $\text{PPT}$ reduces to equality only if $d_{\rm A}d_{\rm B} \leq 6$ \cite{Horodecki1996} as it has been derived by Ryszard Horodecki
and his sons Michał and Paweł. 

One important remark is that, for any $m\in \mathbb{N}$, the set $\text{1R-LOCC}_m$ is compact \cite{locc_review}. This means we can relax the set by its convex hull when considering its role in minimum error state discrimination.  
In the following, we are going to discuss how the convex hull of the set of one-round $\textnormal{LOCC}$ measurements with bounded communication may be characterised by a converging hierarchy of outer approximations that each may be characterised in terms of positive semidefinite matrices subject to affine-linear constraints. 
Recall that, as we consider the problem of quantum state discrimination as discussed in Section \ref{sec:state_dis}, the relaxation to the convex hull $\text{conv}(\text{1R-LOCC}_m)$ gives rise to the same success probability.   
The idea is that one can apply the recently developed hierarchy of constrained symmetric extensions \cite{berta_optimization} for bipartite separable quantum states whose tensor factors are subject to additional affine constraints. 

The idea of our method starts with the introduction of operators $R^{a\lambda b} \in \textnormal{Herm}(\mathbb{C}^{d_{\rm A}} \otimes \mathbb{C}^{d_{\rm B}})$ defined by 
\begin{equation}
    \label{eq:simple_r_representation}
    R^{a\lambda b} = A^{a} \otimes B^{\lambda|b}
\end{equation}
that allow us to rewrite Eq.~\eqref{eq:1-r-meas-op} via
\begin{equation}
    M^{\lambda} = \sum_{a=1}^{m} R^{a\lambda a}. 
\end{equation}
From Eq.~\eqref{eq:simple_r_representation}, the operator $R^{a\lambda b}$ satisfies some constraints. First, it is clearly a positive semidefinite and separable operator, which implies that it has a positive semidefinite partial transpose, i.e., $(R^{a \lambda b})^{T_{\rm A}} \succcurlyeq 0$. Furthermore, summing over the indices $\lambda$ and $a$ gives rise to the equalities
\begin{align}
    &\sum_{a=1}^{m} R^{a \lambda b}=\frac{\mathds{1}_{\rm A}}{d_{\rm A}} \otimes \sum_{a=1}^{m}\Tr_{\rm A}(R^{a \lambda b})\quad \forall \lambda,b, \label{eq:simple_affine_a}\\
& \sum_{\lambda=1}^n R^{a\lambda b}= \sum_{\lambda=1}^{n} \sum_{b' = 1}^{m} \Tr_{\rm B}(R^{a \lambda b'})\otimes \frac{\mathds{1}_{\rm B}}{md_{\rm B}}\quad \forall a,b \label{eq:simple_affine_b}
\end{align}
that follow from the fact that $A^{a}$ and $B^{\lambda|b}$ are measurement operators so that they sum up to the identity. 
With that, we already arrive at the first level $(k=1)$ of the hierarchy used to characterise $\textnormal{conv}(\text{1R-LOCC}_m)$.  Explicitly, in the problem of minimum error state discrimination, one obtains a relaxation by the following optimisation problem.

\begin{algorithm}
    \label{alg:1r_locc}
   $$\begin{array}{ll}
\text{given} & \{(p_\lambda, \varrho_\lambda)\}_{\lambda=1}^n, m \\    
\text{maximize} & \sum_{\lambda=1}^n p_\lambda \text{Tr}(\varrho_\lambda M^\lambda) \\
\text{subject to} &  M^\lambda \coloneqq  \sum_{a=1}^mR^{a \lambda a}\\
&\sum_{a=1}^{m} R^{a \lambda b}=\mathds{1}_{\rm A}/d_{\rm A} \otimes \sum_{a=1}^{m}\Tr_{\rm A}(R^{a \lambda b})\quad \forall \lambda,b\\
& \sum_{\lambda=1}^n R^{a\lambda b}= \sum_{\lambda=1}^{n} \sum_{b' = 1}^{m} \Tr_{\rm B}(R^{a \lambda b'})\otimes \mathds{1}_{\rm B}/md_{\rm B}\quad \forall a,b\\
& R^{a\lambda b}   \succcurlyeq  0\quad \forall a, \lambda,b\\
& (R^{a \lambda b})^{T_{\rm A}} \succcurlyeq 0 \quad\forall a, \lambda,b
\end{array}$$
\caption{$\text{1R-LOCC}_m$ SDP for minimum error discrimination at $k=1$}
\label{algo: LOCC}
\end{algorithm}

The value computed by the optimisation problem \ref{algo: LOCC} gives rise to an efficiently computable upper bound of $p^\text{succ}_{\mathcal{M}}$ with $\mathcal{M} = \text{1R-LOCC}_m$. However, as we will see later in the case of the examples presented in Section \ref{sec:examples}, the quality of this bound is often not sufficient to get the precise value of $p^\text{succ}_{\mathcal{M}}$ or to see interesting separations between different measurement classes.   

To improve the relaxation above, we go one step further and introduce a hierarchy involving the operators $R^{\bm{a} \lambda b} \in \textnormal{Herm}(\mathbb{C}^{d_{\rm A}^{k}} \otimes \mathbb{C}^{d_{\rm B}})$ with $\bm{a} = (a_1,\dots, a_k) \in \{1,\dots,m\}^{\times k}$ that are given by
\begin{equation}
    \label{eq:big_r_rep}
    R^{\bm{a} \lambda b} = A^{a_1} \otimes \dots \otimes A^{a_k} \otimes B^{\lambda|b}
\end{equation}
and $k \in \mathbb{N}$ is called the level of the hierarchy. Since the measurements $\{A^{a_{j}}\}_{a_{j} = 1}^{m}$ in the $k$ tensor factors of $\mathbb{C}^{d_{\rm A}}$ in Eq.~\eqref{eq:big_r_rep} are identical, one can see that the operator $R^{\bm{a} \lambda b}$ is invariant under permutations in the sense that
\begin{equation}
    (U_{k,d_{\rm A}}^{\sigma} \otimes \mathds{1}_{\rm B}) \,  R^{\bm{a}\lambda b} \, (U_{k,d_{\rm A}}^{{\sigma}^{-1}} \otimes \mathds{1}_{\rm B}) = R^{\sigma (\bm{a}) \lambda b}
\end{equation}
where $U_{k,d}^{\sigma}$ is the standard unitary representation of the symmetric group $S_k$ on ${(\mathbb{C}^{d})}^{\otimes k}$ defined by
\begin{equation}
    U_{k,d}^{\sigma} \ket{i_1} \otimes \dots \otimes \ket{i_k} = \ket{i_{\sigma^{-1}(1)}} \otimes \dots \otimes \ket{i_{\sigma^{-1}(k)}}
\end{equation}
for each permutation $\sigma \in S_k$. Furthermore, $R^{\bm{a}\lambda b}$ obeys some affine constraints obtained from summing over the indices $a_1$ and $\lambda$ that generalise the equations \eqref{eq:simple_affine_a} and \eqref{eq:simple_affine_b} in the case of $k=1$. 
With that, we can formulate one of our main contributions in this work given by the following characterisation of $\text{conv}(\text{1R-LOCC}_m)$, i.e, of the convex hull of one-round LOCC measurements with communication bounds. 
\begin{theorem}[$1\textnormal{R}-\textnormal{LOCC}_{m}$ hierarchy]
    \label{thm:1W_locc_m_fixed_char}
    A bipartite measurement $\{M^{\lambda}\}_{\lambda = 1}^{n} \subset \textnormal{Herm}(\mathbb{C}^{d_{\rm A}} \otimes \mathbb{C}^{d_{\rm B}})$ lies in the convex hull of $1\textnormal{R}-\textnormal{LOCC}_{m}$ if and only if for each level $k \in \mathbb{N}$ there exists an array of positive semidefinite operators $R^{\bm{a}\lambda b} \in \textnormal{Herm}(\mathbb{C}^{d_{\rm A}^{k}} \otimes \mathbb{C}^{d_{\rm B}})$, for $\bm{a} = (a_1,\dots, a_k) \in \{1,\dots,m\}^{\times k}$, $\lambda \in \{1,\dots,n\}$ and $b \in \{1,\dots,m\}$, that satisfies the following constraints:
    \begin{align}
        & M^{\lambda} = \frac{1}{d_{\rm A}^{k-1}}\sum_{a_{1}\dots,a_{k} = 1}^{m} \Tr_{\rm{A}_{1}\dots\rm{A}_{k-1}}(R^{a_{1}\dots a_{k}\lambda a_{k}}) \textit{ for all } \lambda \label{eq:1W_cond_start}\\
        & \sum_{a_1=1}^{m} R^{a_1\dots a_k \lambda b} = \frac{\mathds{1}_{\rm A}}{d_{\rm A}} \otimes \sum_{a_1 = 1}^{m}\Tr_{\rm A_{1}}( R^{a_1\dots a_k \lambda b}) \textit{ for all } a_2,\dots, a_k, \lambda, b \label{eq:left_array_cons}\\
        & \sum_{\lambda = 1}^{n} R^{\bm{a}\lambda b} = \sum_{\lambda=1}^{n} \sum_{b'=1}^{m} \Tr_{\rm B}(R^{\bm{a}\lambda b'}) \otimes \frac{\mathds{1}_{\rm B}}{md_{\rm B}} \textit{ for all } \bm{a},b \label{eq:right_array_cons}\\
        & (U_{k,d_{\rm A}}^{\sigma} \otimes \mathds{1}_{\rm B}) \,  R^{\bm{a}\lambda b} \, (U_{k,d_{\rm A}}^{{\sigma}^{-1}} \otimes \mathds{1}_{\rm B}) = R^{\sigma (\bm{a}) \lambda b} \textit{ for all } \sigma, \bm{a}, \lambda, b \label{eq:array_symmetry}\\
        & {(R^{\bm{a}\lambda b})}^{T_{\rm A^{\ell}}} \succcurlyeq 0 \textit{ for all } \bm{a}, \lambda, b \textit{ and } \ell \in \{0,\dots,k\} \label{eq:array_ppt}\\
        & \sum_{a_1 \dots a_k=1}^{m} \sum_{\lambda = 1}^{n}\Tr( R^{\bm{a}\lambda b}) = d_{\rm A}^k d_{\rm B} \textit{ for all } b. \label{eq:1W_cond_end}
    \end{align}
    In Eq.~\eqref{eq:array_ppt}, $X^{T_{\rm A^{\ell}}}$ denotes the partial transpose applied on the first $\ell$ copies of the system $\rm{A}$ with Hilbert space $\mathbb{C}^{d_{\rm A}}$ and in Eq.~\eqref{eq:array_symmetry}, $\sigma(\bm{a})$ is shorthand notation for the tuple $(a_{\sigma^{-1}(1)},\dots,a_{\sigma^{-1}(k)})$. 
   \end{theorem}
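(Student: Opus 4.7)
The plan is to establish both directions of the iff separately. Necessity is obtained by a direct construction; sufficiency is the technical heart of the proof and relies on the convergent hierarchy of constrained symmetric extensions introduced in Ref.~\cite{berta_optimization}.

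For necessity, assume $\{M^\lambda\}_\lambda \in \textnormal{conv}(\textnormal{1R-LOCC}_m)$. Then there exist weights $p_s \geq 0$ with $\sum_s p_s = 1$ and one-round LOCC measurements $M_s^\lambda = \sum_a A_s^a \otimes B_s^{\lambda|a}$, with local POVMs $\{A_s^a\}_a$ on Alice and $\{B_s^{\lambda|b}\}_\lambda$ on Bob for each $b$, such that $M^\lambda = \sum_s p_s M_s^\lambda$. Define
\begin{equation*}
R^{\bm{a}\lambda b} := \sum_s p_s \, A_s^{a_1} \otimes A_s^{a_2} \otimes \dots \otimes A_s^{a_k} \otimes B_s^{\lambda|b}.
\end{equation*}
Each of Eqs.~\eqref{eq:1W_cond_start}--\eqref{eq:1W_cond_end} is then a short verification: Eq.~\eqref{eq:1W_cond_start} follows because tracing out a factor $A_s^{a_i}$ yields a scalar $\Tr(A_s^{a_i})$ and the $d_{\rm A}^{k-1}$ accounts for the normalisation $\sum_{a_i}\Tr(A_s^{a_i})=d_{\rm A}$; Eqs.~\eqref{eq:left_array_cons} and \eqref{eq:right_array_cons} come from $\sum_a A_s^a=\mathds{1}_{\rm A}$ and $\sum_\lambda B_s^{\lambda|b}=\mathds{1}_{\rm B}$; Eq.~\eqref{eq:array_symmetry} holds because the $k$ Alice factors in each summand are identical; Eq.~\eqref{eq:array_ppt} follows since each summand is a tensor product of positive operators and thus separable, hence PPT, across every bipartition of the Alice copies; and Eq.~\eqref{eq:1W_cond_end} is the normalisation $\sum_{\bm{a},\lambda} \Tr(A_s^{a_1}\otimes\dots\otimes A_s^{a_k}\otimes B_s^{\lambda|b}) = d_{\rm A}^k d_{\rm B}$.

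For sufficiency, assume that for every $k$ there exists an array $\{R^{(k),\bm{a}\lambda b}\}$ satisfying \eqref{eq:1W_cond_start}--\eqref{eq:1W_cond_end}. The strategy is to recast these arrays as fully PPT $k$-extendable states and invoke the constrained de Finetti-type convergence of Ref.~\cite{berta_optimization}. Concretely, for each fixed $b$ I would form the operator
\begin{equation*}
\Omega^{b,(k)} := \sum_{\bm{a},\lambda} \ket{\bm{a},\lambda}\bra{\bm{a},\lambda} \otimes R^{(k),\bm{a}\lambda b}
\end{equation*}
on $C\otimes (\mathbb{C}^{d_{\rm A}})^{\otimes k}\otimes \mathbb{C}^{d_{\rm B}}$, where $C$ is a classical register recording $(\bm{a},\lambda)$. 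The permutation symmetry \eqref{eq:array_symmetry}, when combined with a compatible permutation of the $k$ slots of $C$ that store $\bm{a}$, makes $\Omega^{b,(k)}$ a permutation-symmetric extension; the PPT conditions \eqref{eq:array_ppt} turn it into a fully PPT extension across every $A^{\otimes \ell}$-vs-rest bipartition; and \eqref{eq:1W_cond_end} normalises it. Passing to a convergent subsequence in $k$ and applying the constrained symmetric-extension theorem of Ref.~\cite{berta_optimization} produces, in the limit, a separable decomposition of the $k=1$ object of the form $R^{a\lambda b} = \sum_s p_s\, A_s^a \otimes B_s^{\lambda|b}$ for positive operators $A_s^a,B_s^{\lambda|b}$.

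The main obstacle is promoting this separable decomposition to one in which $\{A_s^a\}_a$ and $\{B_s^{\lambda|b}\}_\lambda$ are genuine POVMs for each $s$, so that Eq.~\eqref{eq:1W_cond_start} reassembles $M^\lambda = \sum_s p_s \sum_a A_s^a \otimes B_s^{\lambda|a}$ and places $\{M^\lambda\}$ inside $\textnormal{conv}(\textnormal{1R-LOCC}_m)$. Here the affine constraints \eqref{eq:left_array_cons} and \eqref{eq:right_array_cons} play the key role: substituting the separable form into \eqref{eq:left_array_cons} forces $\sum_a A_s^a$ to be a multiple of $\mathds{1}_{\rm A}$ on the support of the measure (up to a gauge that can be absorbed into $p_s$ and the Bob-side operators), and substituting into \eqref{eq:right_array_cons} does the same for $\sum_\lambda B_s^{\lambda|b}$. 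Equivalently, I would implement this by an Alice-side unitary twirl that symmetrises $\sum_a A_s^a$ to a scalar multiple of the identity (leaving $R^{a\lambda b}$ invariant by \eqref{eq:left_array_cons}), followed by a per-summand rescaling; an analogous argument on Bob handles \eqref{eq:right_array_cons}. Verifying that this normalisation step is compatible with the limit of the hierarchy, and that the resulting decomposition can be taken finite (by Carathéodory applied within the compact set $\textnormal{conv}(\textnormal{1R-LOCC}_m)$), is the delicate part I would handle most carefully.
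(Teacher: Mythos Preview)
Your necessity direction matches the paper's exactly.

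For sufficiency, you have the right high-level ingredient (the constrained symmetric-extension theorem of Ref.~\cite{berta_optimization}) but the way you set it up has two genuine gaps.

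First, you build a separate extension $\Omega^{b,(k)}$ for each fixed $b$. Applying de Finetti to these independently would yield, for each $b$, a decomposition $R^{a\lambda b}=\sum_s p_s^{(b)} A_s^{a,(b)}\otimes B_s^{\lambda,(b)}$ with \emph{$b$-dependent} Alice factors and weights. There is no mechanism to make these decompositions share a common Alice POVM, so they cannot be reassembled into a single $\textnormal{1R-LOCC}_m$ strategy $M^\lambda=\sum_a A^a\otimes B^{\lambda|a}$. The paper avoids this by packaging the classical index $a$ into Alice's enlarged system $\mathbb{C}^m\otimes\mathbb{C}^{d_{\rm A}}$ and both $\lambda$ and $b$ into Bob's enlarged system $\mathbb{C}^n\otimes\mathbb{C}^m\otimes\mathbb{C}^{d_{\rm B}}$, and then building a \emph{single} extension $\bm{\Omega}_k$ whose diagonal blocks are the $R^{\bm{a}\lambda b}$.

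Second, and more seriously, your ``main obstacle'' paragraph is where the argument actually fails. From a separable decomposition $R^{a\lambda b}=\sum_s p_s A_s^a\otimes B_s^{\lambda|b}$ together with Eq.~\eqref{eq:left_array_cons} you cannot deduce that each $\sum_a A_s^a$ is proportional to $\mathds{1}_{\rm A}$: an affine constraint on a convex combination does not propagate to the individual summands. Your proposed twirl leaves only $\sum_{a_1} R^{a_1\dots}$ invariant, not the individual $R^{\bm a\lambda b}$, so it does not fix this. The paper's resolution is that the POVM normalisations $\sum_a A^a=\mathds{1}_{\rm A}$ and $\sum_\lambda B^{\lambda|b}=\mathds{1}_{\rm B}$ for all $b$ become affine constraints $\Phi(\bm{A}^r)=v_{\rm A}$, $\Psi(\bm{B}^r)=v_{\rm B}$ on the \emph{enlarged} tensor factors, and the \emph{constrained} de Finetti theorem (Proposition~\ref{prop:consep_convergence}) outputs a separable decomposition whose factors already satisfy these constraints. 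No post-hoc promotion is needed; the obstacle you worry about is dissolved by the correct choice of embedding before invoking the theorem.
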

   \begin{proof}
       The proof is presented in Appendix~\ref{ap:theorem_1}.
   \end{proof}
   
    Let us now take a moment to clarify why this theorem is useful. For each $k \in \mathbb{N}$, measurements $\{M^{\lambda}\}_{\lambda = 1}^{n} \subset \textnormal{Herm}(\mathbb{C}^{d_{\rm A}} \otimes \mathbb{C}^{d_{\rm B}})$ that satisfy Eq.~\eqref{eq:1W_cond_start}-\eqref{eq:1W_cond_end} form an outer approximation of $\textnormal{conv}(1\textnormal{R}-\textnormal{LOCC}_{m})$, i.e., the convex hull of the one-round LOCC measurements with a communicated message whose size is quantified by $m$. These approximations become monotonically tighter by increasing the parameter $k$ and can be evaluated by a semidefinite program in at most $\mathcal{O}(m^{k+1}nk)$ positive semidefinite matrices, each having $d_{\rm A}^k d_{\rm B}$ rows and columns. In practice, the number of variables can be drastically reduced in actual implementation by carefully exploiting the permutation invariance described by the constraint \eqref{eq:array_symmetry}. Furthermore, these approximations converge to $\textnormal{conv}(1\textnormal{R}-\textnormal{LOCC}_{m})$ in the limit $k \rightarrow \infty$ which is precisely the statement of Theorem \ref{thm:1W_locc_m_fixed_char}. Such a converging sequence is usually referred to as a hierarchy of outer approximations \cite{Doherty2004}.
    While the number of variables strongly rises by increasing $k$, we will later on see that for our examples, interesting conclusions may be drawn already at fairly low levels $k$ of the hierarchy. 
    
    It should also be noticed that an alternative converging symmetric extension hierarchy could also be constructed by taking copies of the measurement operators on Bob's side. The main reason why we applied the extension on Alice's side, as the party that measures first, is a smaller growth in variables when increasing the level $k$ of the hierarchy. This saving is also justified by the existing result on the convergence speed of constrained symmetric hierarchies \cite[Theorem 2.3.]{berta_optimization} which suggests that extending the system of higher dimension does in general not provide a substantial advantage in the convergence speed.  

 Since the objective function for the minimum error discrimination problem is bi-linear in the measurements of Alice and Bob, an optimum is attained with an extremal POVM on Alice's side. Furthermore, it is a well-known fact that an extremal POVM on $\mathbb{C}^{d}$ can have at most $d^2$ non-zero elements \cite{DAriano2011}. Hence, the optimal discrimination probability saturates for $m \ge d_A^2$, and running the hierarchy for $1 < m \le d_A^2$ is sufficient for drawing conclusions for the entire 1R-LOCC set.

\subsection{Non-adaptive LOCC measurements}
\label{sec:na_meas}

Above, we considered measurements that may be implemented by local operations and one round of classical communication. An intriguing aspect in the study of such measurements is the possibility that one party, Bob, may adapt his measurement apparatus based on the information that has been communicated by Alice. A practical problem that arises in such a protocol is that the local measurements must necessarily be performed after each other so that the quantum system must be coherently stored for this communication time. Hence, one requires the presence of a quantum memory, which is still an active field of research due to its challenging technical implementation \cite{Heshami2016}. This problem motivates the question about the distinction between measurements in which Bob is required to perform an adaptation or not \cite{adapted_advantage}. We will refer to the latter as the set of non-adaptive \text{LOCC} measurements defined as follows.  

\begin{Definition}
A bipartite measurement $\{M^{\lambda}\}_{\lambda = 1}^{n} \subset \textnormal{Herm}(\mathbb{C}^{d_{\rm A}} \otimes \mathbb{C}^{d_{\rm B}})$ is called a \textbf{non-adaptive} \textnormal{LOCC} \textnormal{(NA-LOCC)} measurement if it can be decomposed as 
    \begin{equation}
        \label{eq:na_deco}
        M^{\lambda} = \sum_{a=1}^{m} \sum_{b=1}^{m_{\rm B}} p(\lambda|a,b) \, A^{a} \otimes B^{b},
    \end{equation}
    where $\{A^{a}\}_{a=1}^{m}$ and $\{B^{b}\}_{b=1}^{m_{\rm B}}$ are local \textnormal{POVM's} in Alice's and Bob's systems, respectively, and $\{p(\lambda|a,b)\}_{\lambda = 1}^{n}$ is a probability distribution for each $(a, b)  \in \{1,\dots,m\} \times \{1,\dots,m_{\rm B}\}$.

We define $\mathrm{NA\text{-}LOCC}$ to be the set of all bipartite measurements $\{M^\lambda\}_{\lambda=1}^n$, for any number of outcomes $n$, for which there exist $m,m_{\rm B}\in\mathbb{N}$, local \textnormal{POVMs} $\{A^{a}\}_{a=1}^{m}$ and $\{B^{b}\}_{b=1}^{m_{\rm B}}$, and conditional probabilities $p(\lambda|a,b)$ such that \eqref{eq:na_deco} holds.

For a fixed $m\in\mathbb{N}$, we define $\mathrm{NA\text{-}LOCC}_{m}\subseteq \mathrm{NA\text{-}LOCC}$ to be the subset of measurements $\{M^\lambda\}_{\lambda=1}^n$ for which there exists $m_{\rm B}\in\mathbb{N}$, local \textnormal{POVMs} $\{A^{a}\}_{a=1}^{m}$ and $\{B^{b}\}_{b=1}^{m_{\rm B}}$, and conditional probabilities $p(\lambda|a,b)$ such that \eqref{eq:na_deco} holds.
\end{Definition}
One could likewise consider the refined class $\mathrm{NA\text{-}LOCC}_{m,m_{\rm B}}$, where both Alice's and Bob's local POVMs have bounded numbers of outcomes. However, since the technique we present below
is designed to bound the number of outcomes on Alice's side only we focus only on $\mathrm{NA\text{-}LOCC}_{m}$. As we will see in Sec.~\ref{sec:double_trine}, this is enough to show an advantage of adaptive strategies.
The operational difference between 1R-LOCC and NA-LOCC protocols is presented in the diagrams in Figure~\ref{fig:locc-comparison}. 
\begin{figure}
  \centering
  \begin{subfigure}[t]{0.48\textwidth}
    \centering
    \resizebox{\linewidth}{!}{\begin{tikzpicture}

    \node[quantum_box] (A_box) at (0, 0) {$\{A^{a}\}$};

    \node[quantum_box, right=3cm of A_box] (B_box) {$\{B^{\lambda|a}\}$};

    \node[quantum_state] (Psi_state) at ($ (A_box.east)!0.5!(B_box.west) + (0, 1.2) $) {$\psi$};


    \draw[quantum_line]  (Psi_state.west) -- (A_box.north east);

    \draw[quantum_line] (Psi_state.east) -- (B_box.north west);

    \draw[classical_line] (A_box.east) -- (B_box.west) node[label_style, midway, above] {$a \in \{1,\dots,m\}$};


    \draw[classical_line] (B_box.south) -- ($(B_box.south) + (0, -0.7)$) node[label_style, below=-1mm] {$\lambda$};

\end{tikzpicture}}
    \caption{One-round LOCC protocol}
    \label{fig: 1W-LOCC illustration}
  \end{subfigure}\hfill
  \begin{subfigure}[t]{0.48\textwidth}
    \centering
    \resizebox{\linewidth}{!}{\begin{tikzpicture}

    \node[quantum_box] (A_box) at (0, 0) {$\{A^{a}\}$};

    \node[quantum_box, right=3cm of A_box] (B_box) {$\{B^{b}\}$};

    \node[classical_box, below=0.7cm of B_box] (C_box) {$p(\lambda|a,b)$};

    \node[quantum_state] (Psi_state) at ($ (A_box.east)!0.5!(B_box.west) + (0, 1.2) $) {$\psi$};


    \draw[quantum_line]  (Psi_state.west) -- (A_box.north east);

    \draw[quantum_line] (Psi_state.east) -- (B_box.north west);

\draw[classical_line] (A_box.south) |- (C_box.west) node[label_style, pos=0.25, right] {$a \in \{1,\dots,m\}$};


    \draw[classical_line] (B_box.south) -- (C_box.north) node[label_style, midway, right] {$b$};

        \draw[classical_line] (C_box.south) -- ($(C_box.south) + (0, -0.7)$) node[label_style, below=-1mm]  {$\lambda$};

\end{tikzpicture}}
    \caption{Non-adaptive LOCC protocol}
    \label{fig: NA-LOCC illustration}
  \end{subfigure}

  \caption{Schematic comparison between (a) one-round LOCC protocols and (b) non-adaptive LOCC protocols with message size bounded by $m$. In one-round LOCC measurements, the measurement outcome $a$ of the first party can influence the measurement setting of the other party. In non-adaptive protocols, both the local measurements are independently performed before the individual outcomes are post-processed classically.}
  \label{fig:locc-comparison}
\end{figure}
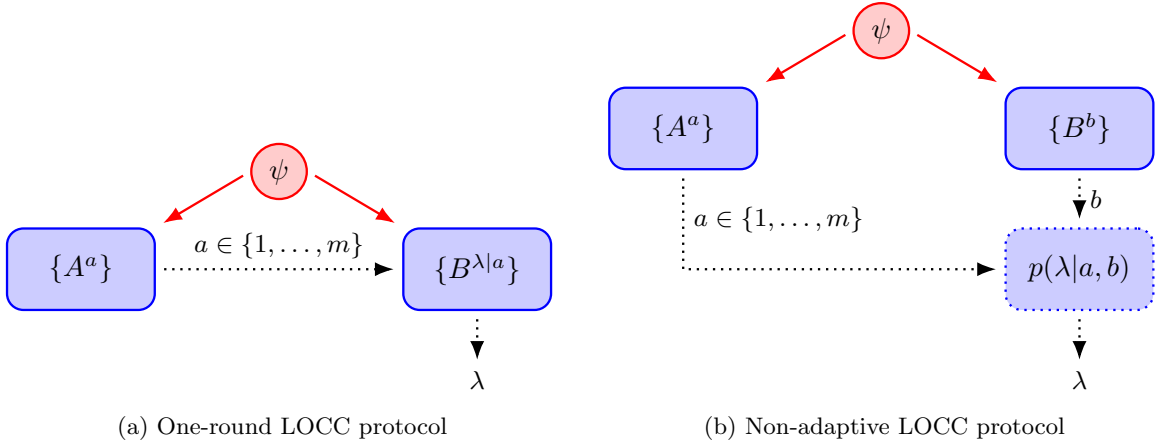

There is a tight connection between non-adaptive \text{LOCC} measurements and the concept of joint measurability that explains how such non-adaptive measurements can be understood as a particular subset of one-round LOCC measurements. To see this, let us rewrite the non-adaptive measurement elements in Eq.~\eqref{eq:na_deco} as 
\begin{equation}
    M^{\lambda} = \sum_{a=1}^{m}  A^{a} \otimes \left( \sum_{b=1}^{m_{\rm B}} p(\lambda|a,b) B^{b}\right) \equiv \sum_{a=1}^{m}  A^{a} \otimes \Tilde{B}^{\lambda|a}.
\end{equation}
It can be verified directly that $\{\Tilde{B}^{\lambda|a}\}_{\lambda=1}^{n}$ with
\begin{equation}
    \label{eq:class_pp}
    \Tilde{B}^{\lambda|a} = \sum_{b=1}^{m_{\rm B}}p(\lambda|a,b) B^{b}
\end{equation}
is a valid \text{POVM} for each value $a$.
In this way, we pass from the intermediate outcome index $b\in\{1,\dots,m_{\rm B}\}$ of the parent POVM
$\{B^{b}\}_{b=1}^{m_{\rm B}}$ to the final outcome index $\lambda\in\{1,\dots,n\}$ by absorbing the classical
post-processing $p(\lambda|a,b)$ into Bob’s operators.
Furthermore, the measurements $\{\Tilde{B}^{\lambda|a}\}_{\lambda=1}^{n}$ running over the different settings $a \in \{1,\dots,m\}$ are \textit{jointly measurable} in the sense that their respective statistics can be inferred from performing a single measurement $\{B^{b}\}_{b=1}^{\rm m_{\rm B}}$ and a subsequent classical post-processing described by Eq.~\eqref{eq:class_pp}, see Ref.~\cite{guehne2023} for a review on joint measurability. This can be summarised by the
following proposition:

\begin{Proposition}
    The sets of one-round \textnormal{LOCC} measurements and non-adaptive \textnormal{LOCC} measurements are related as 
    \begin{equation}
         \textnormal{NA-LOCC}_{m} \subseteq \textnormal{1R-LOCC}_{m}. 
    \end{equation}
    In particular, if $\{M^{\lambda}\}_{\lambda = 1}^{n} \in \textnormal{NA-LOCC}_{m}$ there is a decomposition
    \begin{equation}
        \label{eq:1r_joint_meas}
        M^{\lambda} = \sum_{a=1}^{m}  A^{a} \otimes B^{\lambda|a}
    \end{equation}
    such that the set of measurements $\{B^{\lambda|a}\}_{\lambda = 1}^{n}$ indexed by the setting $a \in \{1,\dots,m\}$ is jointly measurable. 

\end{Proposition}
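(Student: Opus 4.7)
The plan is to take the defining decomposition of a non-adaptive measurement and regroup its terms so that the structure of a one-round LOCC measurement becomes manifest, with the classical post-processing absorbed into a new family of effective POVMs on Bob's side indexed by Alice's outcome. The joint measurability statement will then follow essentially for free from the construction, because the same ``mother'' POVM $\{B^{b}\}_{b=1}^{m_{\rm B}}$ and the same classical channel $p(\lambda|a,b)$ are used for all values of $a$.

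Concretely, I would start from $M^{\lambda} = \sum_{a=1}^{m}\sum_{b=1}^{m_{\rm B}} p(\lambda|a,b)\, A^{a}\otimes B^{b}$ and pull the sum over $b$ inside the tensor product on Bob's side, defining $\widetilde{B}^{\lambda|a} := \sum_{b=1}^{m_{\rm B}} p(\lambda|a,b)\, B^{b}$. This is exactly the regrouping performed in Eq.~\eqref{eq:class_pp}. The next step is to verify that, for each fixed $a$, the collection $\{\widetilde{B}^{\lambda|a}\}_{\lambda=1}^{n}$ is a POVM on Bob's system: positivity is immediate since $p(\lambda|a,b)\ge 0$ and $B^{b}\succcurlyeq 0$, while the normalisation
\begin{equation}
\sum_{\lambda=1}^{n}\widetilde{B}^{\lambda|a} \;=\; \sum_{b=1}^{m_{\rm B}}\Big(\sum_{\lambda=1}^{n} p(\lambda|a,b)\Big) B^{b} \;=\; \sum_{b=1}^{m_{\rm B}} B^{b} \;=\; \mathds{1}_{\rm B}
\end{equation}
uses that $p(\cdot|a,b)$ is a probability distribution and that $\{B^{b}\}$ is a POVM. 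Combined with the fact that $\{A^{a}\}_{a=1}^{m}$ is unchanged and still a POVM with at most $m$ outcomes, this exhibits $\{M^{\lambda}\}$ as an element of $\textnormal{1R-LOCC}_{m}$ via Eq.~\eqref{eq:1r_joint_meas}, proving the inclusion.

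For the joint-measurability claim I would simply observe that the defining equation $\widetilde{B}^{\lambda|a} = \sum_{b} p(\lambda|a,b)\, B^{b}$ is, by definition, a classical post-processing of a single common POVM $\{B^{b}\}_{b=1}^{m_{\rm B}}$, which is precisely the standard characterisation of joint measurability recalled just before the proposition and reviewed in Ref.~\cite{guehne2023}. I do not anticipate any real obstacle here: the argument is a two-line algebraic rearrangement, and the main care required is only book-keeping -- checking that the index $a$ is treated as a setting label rather than as an outcome to be summed over, so that the resulting $\{\widetilde{B}^{\lambda|a}\}$ is one POVM per value of $a$ sharing a common refinement, which is exactly what joint measurability demands.
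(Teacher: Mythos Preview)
Your proposal is correct and follows essentially the same approach as the paper: the paper's argument, given in the paragraph immediately preceding the proposition, is precisely the regrouping $M^{\lambda} = \sum_{a} A^{a} \otimes \Tilde{B}^{\lambda|a}$ with $\Tilde{B}^{\lambda|a} = \sum_{b} p(\lambda|a,b) B^{b}$, followed by the observation that this is a valid POVM for each $a$ and that joint measurability is immediate from the classical post-processing form. Your write-up is slightly more explicit in verifying the normalisation of $\{\widetilde{B}^{\lambda|a}\}_{\lambda}$, but the content is identical.
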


There is a definition of joint measurability that is equivalent to the classical post-processing description in Eq.~\eqref{eq:class_pp}. In fact, as it has been shown in \cite{compatibility}, a set of measurements $\{B^{\lambda|a}\}_{\lambda = 1}^{n}$ labelled by $a \in \{1,\dots,m\}$ is jointly measurable if there exists a parent POVM given by an array of positive semidefinite operators $S^{\bm{b}} \in \textnormal{Herm}(\mathbb{C}^{d_{\rm B}})$ with $\bm{b} = (b_1, \dots, b_{m}) \in \{1,\dots,n\}^{\times m}$ that satisfy
\begin{align}
    \sum_{b_1,\dots,b_{m} = 1}^{n} S^{\bm{b}} = \mathds{1}_{\rm B} 
\end{align}
and 
\begin{equation}
    B^{\lambda|a} = \sum_{b_1,\dots,b_{m} = 1}^{n} \delta_{b_{a}, \lambda} \, S^{\bm{b}}   
\end{equation}
where $b_a$ denotes the $a$'th entry of the tuple $\bm{b}$. In analogy to the discussion in Section \ref{sec:1r_locc}, one can relax the definition of non-adaptive LOCC measurements by introducing operators $R^{\bm{a}\bm{b}} \in \textnormal{Herm}(\mathbb{C}^{d_{\rm A}^{k}} \otimes \mathbb{C}^{d_{\rm B}})$ of the form 
\begin{equation}
    R^{\bm{a}\bm{b}} = A^{a_1} \otimes \dots \otimes A^{a_{k}} \otimes S^{\bm{b}}
\end{equation}
from which one can extract the non-adaptive measurement $M^{\lambda}$ defined in Eq.~\eqref{eq:1r_joint_meas} via
\begin{equation}
    M^{\lambda} = \frac{1}{d_{\rm A}^{k-1}} \sum_{a_1, \dots, a_k = 1}^{m} \ \sum_{b_1,\dots,b_{m} = 1}^{n} \delta_{b_{a_{k}}, \lambda} \, \Tr_{\rm{A}_1 \dots \rm{A}_{k-1}}(R^{\bm{a} \bm{b}}).
\end{equation}

By demanding that $R^{\bm{a} \bm{b}}$ fulfils suitable collection of affine-linear constraints, we can formulate the following characterisation of the convex hull of non-adaptive LOCC measurements. 

\begin{theorem}[$\textnormal{NA}-\textnormal{LOCC}_{m}$ hierarchy]
    \label{thm:na_locc_m_fixed_char}
    A bipartite measurement $\{M^{\lambda}\}_{\lambda = 1}^{n} \subset \textnormal{Herm}(\mathbb{C}^{d_{\rm A}} \otimes \mathbb{C}^{d_{\rm B}})$ lies in the convex hull of $\textnormal{NA-LOCC}_{m}$ if and only if for each $k \in \mathbb{N}$ there exists an array of positive semidefinite operators $R^{\bm{a} \bm{b}} \in \textnormal{Herm}(\mathbb{C}^{d_{\rm A}^{k}} \otimes \mathbb{C}^{d_{\rm B}})$, for $\bm{a} = (a_1,\dots, a_k) \in \{1,\dots,m\}^{\times k}$ and $\bm{b} = (b_1, \dots, b_{m}) \in \{1,\dots,n\}^{\times m}$, that satisfies the following properties:
    \begin{align}
        &M^{\lambda} = \frac{1}{d_{\rm A}^{k-1}} \sum_{a_1, \dots, a_k = 1}^{m} \ \sum_{b_1,\dots,b_{m} = 1}^{n} \delta_{b_{a_{k}}, \lambda} \, \Tr_{\rm{A}_1 \dots \rm{A}_{k-1}}(R^{\bm{a} \bm{b}}) \textit{ for all } \lambda \label{eq:na_eff_red}\\
        &\sum_{a_1=1}^{m} R^{a_1\dots a_k \bm{b}} = \frac{\mathds{1}_{\rm A}}{d_{\rm A}} \otimes \sum_{a_1 = 1}^{m}\Tr_{\rm A_{1}}( R^{a_1\dots a_k \bm{b}}) \textit{ for all } a_2,\dots, a_k, \bm{b} \\
        & \sum_{b_1, \dots, b_m = 1}^{n} R^{\bm{a}\bm{b}} = \sum_{b_1, \dots, b_m = 1}^{n} \Tr_{\rm B}(R^{\bm{a}\bm{b}}) \otimes \frac{\mathds{1}_{\rm B}}{d_{\rm B}} \textit{ for all } \bm{a} \\
        & (U_{k,d_{\rm A}}^{\sigma} \otimes \mathds{1}) \,  R^{\bm{a} \bm{b}} \, (U_{k,d_{\rm A}}^{{\sigma}^{-1}} \otimes \mathds{1}_{B}) = R^{\sigma(\bm{a}) \bm{b}} \textit{ for all } \sigma, \bm{a}, \bm{b} \\
        & {(R^{\bm{a}\bm{b}})}^{T_{\rm A^{\ell}}} \succcurlyeq 0 \textit{ for all } \bm{a}, \bm{b}, \ell \in \{0,\dots,k\} \\
        &\sum_{a_1 \dots a_k=1}^{m} \sum_{b_{1} \dots b_{m} = 1}^{n}\Tr(R^{\bm{a} \bm{b}}) = d_{\rm A}^k d_{\rm B}. \label{eq:na_norm}
    \end{align}
\end{theorem}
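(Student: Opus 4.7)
The plan is to prove both directions of the equivalence in parallel with the proof of Theorem~\ref{thm:1W_locc_m_fixed_char}, adjusting the Bob-side structure to accommodate a parent POVM rather than a setting-dependent family of POVMs. For the forward direction, I would start from a convex decomposition $\{M^{\lambda}\} = \sum_{i} \mu_{i} \{M^{\lambda}_{i}\}$ with each $\{M^{\lambda}_{i}\} \in \text{NA-LOCC}_{m}$ and apply the joint-measurability characterisation from \cite{compatibility} recalled immediately before the theorem, which replaces the Bob-side POVM $\{B^{b}_{i}\}$ together with the classical post-processing by a single parent POVM $\{S^{\bm{b}}_{i}\}$. I would then set
\begin{equation}
R^{\bm{a}\bm{b}} = \sum_{i} \mu_{i}\, A^{a_{1}}_{i} \otimes \cdots \otimes A^{a_{k}}_{i} \otimes S^{\bm{b}}_{i},
\end{equation}
and verify each constraint term-by-term: full-level PPT is inherited from the product structure; permutation symmetry is manifest; the Alice-side and Bob-side affine constraints follow from $\{A^{a}_{i}\}_{a}$ and $\{S^{\bm{b}}_{i}\}_{\bm{b}}$ summing to the respective identities; and the extraction identity reduces to the definition of the effective Bob measurement as the marginal of $\{S^{\bm{b}}_{i}\}_{\bm{b}}$ singled out by the Kronecker delta $\delta_{b_{a_{k}},\lambda}$.

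For the reverse direction, which carries the real content, I would invoke the constrained symmetric extension hierarchy of \cite{berta_optimization}, a refinement of the Doherty--Parrilo--Spedalieri approach \cite{Doherty2004}. The full-level PPT constraints together with Alice-side permutation invariance at every $k$ are exactly the structural input needed to conclude, in the limit $k \to \infty$, that each $R^{\bm{a}\bm{b}}$ is separable across the $\text{A}|\text{B}$ cut in the \emph{constrained} form
\begin{equation}
R^{\bm{a}\bm{b}} = \int d\nu(i)\, A^{a_{1}}_{i} \otimes \cdots \otimes A^{a_{k}}_{i} \otimes S^{\bm{b}}_{i},
\end{equation}
with positive semidefinite factors $A^{a}_{i}$ on Alice and $S^{\bm{b}}_{i}$ on Bob. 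The imposed Alice-side affine constraint then forces $\{A^{a}_{i}\}_{a}$ to be a POVM for $\nu$-almost every $i$, while the Bob-side constraint together with the normalisation in Eq.~\eqref{eq:na_norm} forces $\{S^{\bm{b}}_{i}\}_{\bm{b}}$ to be a parent POVM. Plugging this decomposition into Eq.~\eqref{eq:na_eff_red} realises $M^{\lambda}$ as a convex integral of non-adaptive LOCC measurements.

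I expect the principal obstacle to be faithfully propagating the index structure of $\bm{b}$ through the de Finetti-type limit: the Bob-side factor emerging from the integral representation must be a single operator indexed by the full tuple $\bm{b}$, whose appropriate marginals produce \emph{jointly measurable} POVMs across all settings $a \in \{1,\dots,m\}$, rather than collapsing to a family of independently normalised POVMs that would correspond to a one-round adaptive protocol. The Bob-side affine constraint and the overall normalisation are engineered precisely to encode this parent-POVM structure, so the delicate technical step is to confirm that the constrained extension hierarchy preserves this affine structure in the limit, and that the measure-theoretic framework underlying the de Finetti step handles the enlarged combinatorial index on Bob without losing this structural refinement over the proof of Theorem~\ref{thm:1W_locc_m_fixed_char}.
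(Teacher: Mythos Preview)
Your overall strategy matches the paper's: the forward direction via the explicit tensor-product ansatz is exactly what the paper does (implicitly, by analogy with the proof of Theorem~\ref{thm:1W_locc_m_fixed_char}), and for the reverse direction you correctly identify that Proposition~\ref{prop:consep_convergence} (the constrained symmetric extension result of \cite{berta_optimization}) is the key tool.

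However, there is a genuine gap in your reverse direction as written. Proposition~\ref{prop:consep_convergence} is a statement about a \emph{single} bipartite operator $\rho$, not about a family $\{R^{\bm{a}\bm{b}}\}$ indexed by $\bm{a},\bm{b}$. Applying it ``to each $R^{\bm{a}\bm{b}}$'' separately would only yield separable decompositions with measures $\nu_{\bm{a}\bm{b}}$ that a priori depend on the indices, and there is no mechanism in your sketch that forces these to coincide. Without a common measure you cannot extract a single POVM $\{A^{a}_{i}\}_{a}$ on Alice's side and a single parent POVM $\{S^{\bm{b}}_{i}\}_{\bm{b}}$ on Bob's side, which is precisely what the non-adaptive structure requires. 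You correctly flag this as the principal obstacle, but you do not resolve it.

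The paper's resolution is to package the entire array into one block-diagonal operator
\[
\bm{\Omega}_{k} = \sum_{\bm{a},\bm{b}} \ketbra{\bm{a}} \otimes \ketbra{\bm{b}} \otimes R^{\bm{a}\bm{b}}
\in \mathcal{L}\bigl((\mathbb{C}^{m d_{\rm A}})^{\otimes k} \otimes \mathbb{C}^{n^{m} d_{\rm B}}\bigr),
\]
so that the classical registers $\ketbra{\bm{a}}$ and $\ketbra{\bm{b}}$ become part of the quantum systems on which Proposition~\ref{prop:consep_convergence} acts. The constraints \eqref{eq:na_eff_red}--\eqref{eq:na_norm} then translate verbatim into the constrained-symmetric-extension conditions for this single $\bm{\Omega}_{k}$, and the proposition delivers a constrained separable $\bm{X}$ with \emph{one} convex decomposition whose block structure automatically yields the common $\{A^{a}_{i}\}$ and $\{S^{\bm{b}}_{i}\}$. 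This block-diagonal embedding is the missing ingredient in your plan; once you add it, the ``delicate technical step'' you worry about dissolves, because the index structure on Bob's side is carried by a classical register inside the operator rather than by an external label that must survive a limit.
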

\begin{proof}
    The proof is presented in Appendix \ref{app:thm_2_proof}.
\end{proof}
For increasing values of $k$ the set of measurements $\{M^{\lambda}\}_{\lambda = 1}^{n}$ that satisfy Eq.~\eqref{eq:na_eff_red}-\eqref{eq:na_norm} form a sequence of outer approximations of the non-adaptive measurements $\textnormal{conv}(\textnormal{NA-LOCC}_{m})$ that is guaranteed to converge in the limit $k \rightarrow \infty$. 

We now exhibit the analogous first level $(k=1)$ of the hierarchy for minimum error discrimination with $\text{NA-LOCC}_m$ measurements, which is given in the optimisation problem~\ref{algo: lo-sdp}.

\begin{algorithm}
   $$\begin{array}{ll}
\text{given} & \{(p_\lambda, \varrho_\lambda)\}_{\lambda=1}^n, m \\    
\text{maximize} & \sum_{\lambda=1}^n p_\lambda \text{Tr}(\varrho_\lambda M^\lambda) \\
\text{subject to} &  M^\lambda \coloneqq  \sum_{a=1}^{m} \sum_{b_{1},\dots ,b_{m}=1}^{n} \delta_{b_{a}, \lambda} \, R^{a \bm{b}}\\
&\sum_{a=1}^{m} R^{a \bm{b}}=\mathds{1}_{\rm A}/d_{\rm A} \otimes \sum_{a=1}^{m} \Tr_{\rm A}(R^{a \bm{b}})\quad \forall \bm{b} = (b_1,\dots, b_m)\\
& \sum_{b_1,\dots b_m=1}^n  R^{a \bm{b}}= \sum_{b_1,\dots b_m=1}^n\Tr_{\rm B}(R^{a \bm{b}})\otimes \mathds{1}_{\rm B}/d_{\rm B}\quad \forall a\\
& R^{a \bm{b}}   \succcurlyeq  0\quad \forall a, \bm{b}\\
& (R^{a \bm{b}})^{T_{\rm A}} \succcurlyeq 0 \quad\forall a, \bm{b}
\end{array}$$
\caption{$\text{NA-LOCC}_m$ SDP for minimum-error discrimination at $k=1$}
\label{algo: lo-sdp}
\end{algorithm}

By the same extremality argument used in the previous section, the optimum discrimination probability over $\text{NA-LOCC}_m$ saturates once $m \ge d_A^2$. Consequently, sweeping $m$ over the range $1 < m \le d_A^2$ provides an optimal value for the entire NA-LOCC set. For any $k \in \mathbb{N}$, Theorem \ref{thm:na_locc_m_fixed_char} can be used to compute an upper bound for the state discrimination success probability using non-adaptive strategies. Therefore, if any adaptive strategy gives rise to a higher value than what is obtained by such an upper bound, one certifies an advantage of adaptive measurement strategies over all non-adaptive strategies in the sense of question (d) in the Introduction; see Section \ref{sec:double_trine} for a specific example. 

\section{Examples}
\label{sec:examples}
We now present several applications of our hierarchies to distinct state discrimination tasks and show how they enable a fine-grained analysis regarding the operational LOCC parameters that would not be possible with previous methods. All computations were performed with a Julia implementation, available in our GitHub repository~\cite{github_repo}.

\subsection{Iso-entangled bases}

In our first example, we show how our technique allows us to reveal the directional dependence of LOCC protocols in the sense that for some state discrimination tasks there is a preferred direction of communication. 
Furthermore, the example is designed to showcase how entanglement in the state ensemble affects the optimal performance in their discrimination when restricting to LOCC measurements. To approach this question, we consider the so-called iso-entangled two-qubit bases that have been studied in Ref.~\cite{isoentangled_basis}, consisting of complete sets of orthogonal pure states in which every state has the same entanglement; the latter is quantified by the entanglement measure called \textit{tangle} computed as $T(\ket{\psi}) = 2(1-\Tr[(\Tr_{\rm B}(\ketbra{\psi}))^2])$, i.e., the square of the concurrence \cite{concurrence}.  
We focus on the so-called Bell-basis family that comprises special instances of iso-entangled bases, from which we extract a one-parameter family of bases ranging from a separable basis to the maximally entangled basis of Bell states. 
\begin{example}[Bell-basis family]
The Bell-basis family is the family of two-qubit bases parametrized as
\begin{align*}
&\ket{\psi_1} = \sin(\delta)\,\ket{01} \;+\; \sin(\tau)\,\cos(\delta)\,\ket{10} \;-\; \cos(\tau)\,\cos(\delta)\,\ket{11}
\\
&\ket{\psi_2} = \cos(\delta)\,\ket{01} \;-\; \sin(\tau)\,\sin(\delta)\,\ket{10} \;+\; \cos(\tau)\,\sin(\delta)\,\ket{11} \\
&\ket{\psi_3} = -\,\cos(\delta)\,e^{i\xi}\,\ket{00} \;+\; \cos(\tau)\,\sin(\delta)\,\ket{10} \;+\; \sin(\tau)\,\sin(\delta)\,\ket{11}
\\
&\ket{\psi_4} = \sin(\delta)\,e^{i\xi}\,\ket{00} \;+\; \cos(\tau)\,\cos(\delta)\,\ket{10} \;+\; \sin(\tau)\,\cos(\delta)\,\ket{11},
\end{align*}
all of which have the following tangle
\begin{equation}
    T(\delta, \tau, \xi) = \sin^2(2\delta)\sin^2(\tau).
\end{equation}
\end{example}
It should be noted that the states $\ket{\psi_{\lambda}}$ form an orthonormal basis for each parameter setting so that they can be perfectly distinguished by collective measurements. Therefore, we study the effect of entanglement in local discrimination strategies, where we will see that entanglement prevents perfect discrimination.  

We restrict our attention to the case where \(\xi=\pi/2\) and \(\delta=\pi/4\), and study the optimal 1R-LOCC success probability as a function of \(\tau\), with particular attention to the question (c) of the Introduction on the preferred direction of communication. Using the $\text{1R-LOCC}_m$ SDP hierarchy up to the level $k=2$ provided by Theorem \ref{thm:1W_locc_m_fixed_char}, we compute upper bounds to the optimal success probability of discrimination for both possible directions of communication (Alice measures first or Bob measures first) with message size \((m=2)\), see Figure \ref{fig: isoentangled}. The latter can be obtained using the same optimisation problem but swapping the qubits of Alice and Bob.  
It turns out that the bounds obtained at this level of the hierarchy are actually exact up to the numerical accuracy, as they coincide with lower bounds that are obtained by performing a see-saw optimisation routine. The latter just describes the optimisation technique in which either Alice's or Bob's measurements are fixed, and the optimisation is performed regarding the measurements of the other party. By extracting the optimising measurement, this step can be repeated iteratively until convergence in the success probability is obtained.

\begin{figure}
    \centering
    \includegraphics[width=0.7\linewidth]{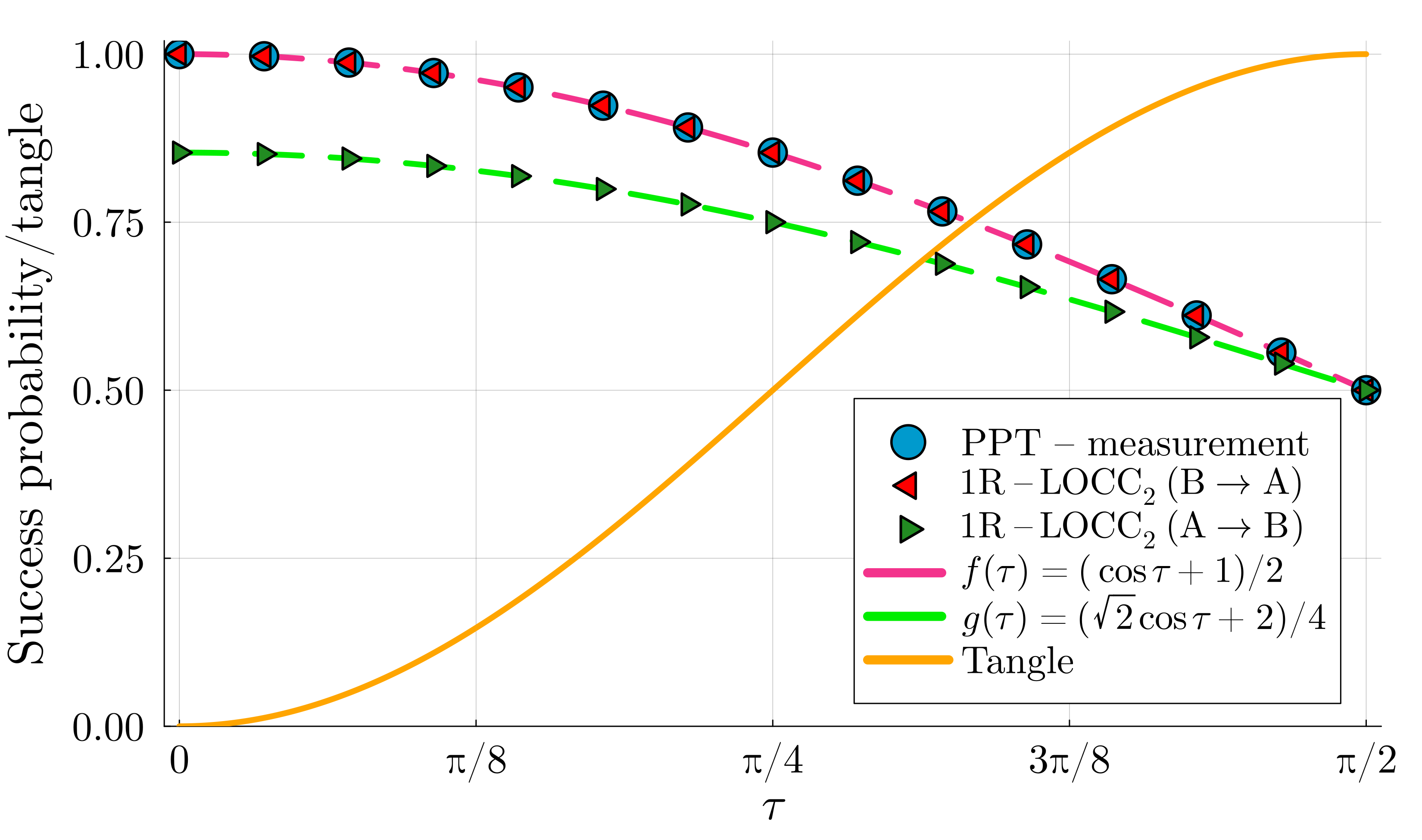}
    \caption{Optimal state discrimination of the Bell-basis family using Alice-first and Bob-first LOCC measurements with one bit of communication, together with the upper bound from the PPT relaxation. It can be seen that there is a preferred direction of communication (Bob $\rightarrow$ Alice). For a comparison between the optimal success probabilities and the entanglement present in the state ensemble, we also plotted the tangle $T$ of the respective basis vectors.}
    \label{fig: isoentangled}
 \end{figure}

The values depicted in the plot of Figure~\ref{fig: isoentangled} show that, for this family of bases, one-round LOCC with Alice measuring first, denoted $\text{LOCC}\,{\rm A\to B}$, consistently outperforms the reverse order $\text{LOCC} \,{\rm B\to A}$ with a single bit of communication ($m=2$). We compared our values with the relaxation that all measurement operators are PPT and observed that such values coincide with our results for the $\rm B\to A$ communication. Since PPT measurements form an outer approximation of all LOCC measurements (unlimited rounds and messages of unlimited size), we deduced that in this case the optimal LOCC strategy only requires one round of classical communication with a single bit where Bob has to measure first. Therefore, the example gives a particularly interesting answer to question (a) asked in the introduction. To obtain the accurate value for the disadvantageous direction of communication $\rm{A} \to \rm{B}$ the level $k=2$ has been necessary. 
It should be noted that it is not possible to determine the preferable direction of communication when considering the PPT relaxation alone.

To build an intuition, we will now analyse the strategies for distinguishing between the elements of the Bell-basis family that accurately produce the optimal success probabilities of discrimination displayed in Figure~\ref{fig: isoentangled} obtained from our numerics.

\subsubsection{Explicit Alice → Bob strategy}
If Alice measures first, we will see that Bob can simply use a fixed measurement that does not depend on Alice's outcome. It is hence a non-adaptive LOCC measurement; see Section \ref{sec:na_meas}. 
The optimal strategy starts with Alice measuring the two-outcome observable
\begin{equation}
\label{eq:A-obs}
A \;=\; \tfrac{\sqrt{2}}{2}\,(\sigma_x+\sigma_y)
\end{equation}
that has outcomes $a = \pm 1$. This measurement choice is independent of the value of the parameter $\tau$.

If Alice observes the outcome $+1$, she guesses that the state is either $\ket{\psi_3}$ or $\ket{\psi_2}$; otherwise if she observes the outcome $-1$, she supposes that the state is either $\ket{\psi_4}$ or $\ket{\psi_1}$.
 The outcome probabilities for each state group are
\begin{equation}
\label{eq:A-probs}
\begin{array}{ll}
p\big(+\,\big|\,\psi_1 \text{ or } \psi_4\big)=\dfrac{-\sqrt{2}\cos\tau+2}{4},
&
p\big(+\,\big|\,\psi_2 \text{ or } \psi_3\big)=\dfrac{\sqrt{2}\cos\tau+2}{4},\\[8pt]
p\big(-\,\big|\,\psi_1 \text{ or } \psi_4\big)=\dfrac{\sqrt{2}\cos\tau+2}{4},
&
p\big(-\,\big|\,\psi_2 \text{ or } \psi_3\big)=\dfrac{-\sqrt{2}\cos\tau+2}{4},
\end{array}
\end{equation}
from which the error probability of Alice's guess can be deduced as 
\begin{equation}
    p_{\rm err} = \dfrac{-\sqrt{2}\cos\tau+2}{4}. 
\end{equation}
After Alice's measurement, regardless of her outcome, Bob's possible reduced post-measurement states are pairwise orthogonal, i.e., 
\begin{equation}
\label{eq:Bob-orth-Afirst}
\sigma_B^1 \perp \sigma_B^4
\quad\text{and}\quad
\sigma_B^2 \perp \sigma_B^3,
\end{equation}
so if Alice is correct in her assessment between $\{\psi_2,\psi_3\}$ vs.~$\{\psi_1,\psi_4\}$, Bob can perfectly discriminate within the pair using a single fixed observable
\begin{equation}
\label{eq:B-fixed-Afirst}
B \;=\; 
\frac{(2\cos\tau+\sqrt{2})\sin\tau}{\sqrt{2}\cos\tau+2}\,\sigma_x
\;+\;
\frac{\sqrt{2}\sin\tau}{\sqrt{2}\cos\tau+2}\,\sigma_y
\;+\;
\frac{(2\cos\tau+\sqrt{2})\sin\tau}{\sqrt{2}\cos\tau+2}\,\sigma_z.
\end{equation}

Since Bob's measurement outcome can be predicted deterministically, the total success probability equals the probability that Alice's outcome corresponds to the correct pair. Hence, under the uniform prior, this gives
\begin{equation}
\label{eq:Psucc-Afirst}
p^{\text{succ}}_{A\to B}(\tau)
= \frac{1}{4}\Big[
p(-|\psi_1)+p(-|\psi_4)+p(+|\psi_2)+p(+|\psi_3)
\Big]
= \frac{\sqrt{2}\cos\tau+2}{4},
\end{equation}
which we plot in Figure~\ref{fig: isoentangled} to show that it coincides with the numerical results and saturates the upper bounds up to numerical accuracy.
\subsubsection{Explicit Bob → Alice strategy}

For optimal discrimination, given a value of $\tau$, Bob measures the two-outcome observable
\begin{equation}
\label{eq:B-obs}
B \;=\; \sin\tau\,\sigma_x+\cos\tau\,\sigma_z,
\end{equation}
that has outcomes $b = \pm 1$. 

If Bob observes the outcome \(+\), he assumes the state is either \(\psi_3\) or \(\psi_4\); if he observes the outcome \(-\), he restricts it to \(\psi_1\) or \(\psi_2\).
 The outcome probabilities for each state group are\begin{equation}
\label{eq:B-probs}
\begin{array}{llll}
p(+\,|\,\psi_1\text{ or }\psi_2)=\dfrac{1-\cos\tau}{2},
&
p(+\,|\,\psi_3\text{ or }\psi_4)=\dfrac{1+\cos\tau}{2},\\[8pt]
p(-\,|\,\psi_1\text{ or }\psi_2)=\dfrac{1+\cos\tau}{2},
&
p(-\,|\,\psi_3\text{ or }\psi_4)=\dfrac{1-\cos\tau}{2}.
\end{array}
\end{equation}
After Bob's measurement, regardless of his outcome, Alice's possible partial states are pairwise orthogonal;
\begin{equation}
\label{eq:Alice-orth-Bfirst}
\sigma_A^1 \perp \sigma_A^2
\quad\text{and}\quad
\sigma_A^3 \perp \sigma_A^4,
\end{equation}
so, if Bob is correct in his assessment between $\{\psi_3,\psi_4\}$ vs $\{\psi_1,\psi_2\}$, Alice can perfectly discriminate within the pair with one of these observables, depending on Bob's outcome:
\begin{equation}
\label{eq:A-adapt}
A^{+}=\sigma_x,
\qquad
A^{-}=\sigma_y.
\end{equation}

Since Alice's outcome can be predicted deterministically once the correct pair is communicated, the total success probability equals the probability that Bob's outcome corresponds to the correct pair. Under the uniform prior, this gives
\begin{equation}
\label{eq:Psucc-Bfirst}
p^{\text{succ}}_{B\to A}(\tau)
= \frac{1}{4}\Big[
p(+|\psi_1)+p(+|\psi_2)+p(-|\psi_3)+p(-|\psi_4)
\Big]
= \frac{1+\cos\tau}{2},
\end{equation}
which we also plot in Figure~\ref{fig: isoentangled} to show that it accurately matches the numerical results.

\subsection{Double trine}
\label{sec:double_trine}
Most results on state discrimination via LOCC address \emph{perfect} protocols, where the goal is to certify that a set of states can be distinguished with probability 1, rather than to maximize the minimum-error success probability. In the literature, either there are derived analytic criteria \cite{hardy_bipartite,2x2_distinguishability,3x3_distinguishability,2x3_distinguishability} or upper bounds via the PPT relaxation  \cite{ppt_relax_1,ppt_relax_2, discrimination_thiesis} are provided. 
With our method presented in this work, we can now ask which conclusions from the perfect setting persist in the probabilistic (minimum-error) setting.

A previously known result is that, for perfect discrimination in \(n\)-qubit systems, allowing general local measurements gives no advantage over local projective measurements. In \cite{LPCC_enough}, it was found that for an $n$-qubit system, a set of mutually orthogonal states can be perfectly distinguished by LOCC measurements if and only if this discrimination can be achieved with local projective measurements.

Using our \(\text{1R-LOCC}_m\) SDP hierarchy provided in Theorem \ref{thm:1W_locc_m_fixed_char}, we can show that this result does not generalise to the minimum error discrimination problem. Hence, non-projective measurements can provide an advantage over all projective measurements in the success probability of discrimination. One basic example where we can see this is the two-qubit “double trine” ensemble, defined as a uniform distribution over the product states below.
\begin{example}[Double trine \cite{double_trine}] The double trine is an equiprobable ensemble of the states 
$\ket{\psi_i}=\ket{s_i}\otimes\ket{s_i}$ for $i=0,1,2$, where
$$
\ket{s_0}=\ket{0},
\qquad
\ket{s_1}=-\frac12\ket{0}-\frac{\sqrt3}{2}\ket{1},
\qquad
\ket{s_2}=-\frac12\ket{0}+\frac{\sqrt3}{2}\ket{1}.
$$
\end{example}

To test whether projective measurements suffice, we will search for changes in the optimal discrimination probabilities for different message budgets \(m\). For qubits, any local projective measurement has two outcomes, so every protocol with projective measurements is realizable within \(m=2\). Hence, if the optimal one-round LOCC success probability increases to \(m>2\), this certifies a benefit from using non-projective POVMs. We evaluate this on the double trine ensemble. For each \(m\in\{2,3,4\}\), we compute a certified upper bound on the one round LOCC optimum using the \(\text{1R-LOCC}_m\) SDP hierarchy at level \(k=3\), and we obtain a lower bound via a randomized see-saw heuristic over \(\text{1R-LOCC}_m\) strategies. The resulting bounds are summarised in Table~\ref{tab: trine}.

\begin{table}
  \centering
\begin{tabular}{ccccc}
  \toprule
  \textbf{$m$} & \textbf{\shortstack{\(\text{1R-LOCC}_m\)\\see-saw}}  & \textbf{\shortstack{\(\text{1R-LOCC}_m\)\\SDP hierarchy}} & \textbf{\shortstack{\(\text{NA-LOCC}_m\)\\see-saw}} & \textbf{\shortstack{\(\text{NA-LOCC}_m\)\\SDP hierarchy}} \\
  \midrule
  2 & 0.8976 & 0.905  & 0.8003 & 0.8116\\
  3 & 0.9330  & 0.9346 & 0.8079 & 0.8248\\
  4 & 0.9330 & 0.950 & 0.8079 & 0.8509 \\
  \bottomrule
\end{tabular}
    \caption{Bounds on minimum error success probabilities for the double trine under \(\text{1R-LOCC}_m\) as a function of the message alphabet size \(m\); “see-saw” gives lower bounds, “SDP hierarchy” gives upper bounds; \(\text{NA-LOCC}_m\) denotes the non-adaptive subclass. The most important comparison is between to leftmost and rightmost columns, providing lower bounds achieved by adaptive strategies and upper bounds achieved by non-adaptive strategies, respectively. One can see that for each number of outcomes $m$ on Alice's side, there is a clear gap between adaptive and non-adaptive strategies.}
    \label{tab: trine}
\end{table}

As seen from Table~\ref{tab: trine}, the upper bound increases with the message budget \(m\). In particular, the \(m=3\) lower bound from the \(\text{1R-LOCC}_m\) see–saw, \(0.9330\), exceeds the \(m=2\) upper bound from the \(\text{1R-LOCC}_m\) SDP hierarchy, \(0.905\). As discussed above, this certifies that projective measurements are not sufficient to reach the one–round LOCC optimum in the minimum–error setting. Equivalently, non-projective POVMs strictly improve the performance on this ensemble beyond what projective strategies can achieve.

In this example, we can also demonstrate the necessity of adaptivity in Bob's POVM.  For each \(m\), we compute certified upper bounds on the non-adaptive LOCC optimum using the \(\text{NA-LOCC}_m\) SDP hierarchy (level \(k=3\)), and we pair these with lower bounds obtained via a randomized see-saw heuristic restricted to non-adaptive strategies. The resulting bounds are also reported in Table~\ref{tab: trine}.

 Even allowing arbitrary single-qubit POVMs for the first measuring party, which can be indexed with two classical bits of communication, the non-adaptive upper bound to the success probability for \(m=4\) from the \(\text{NA-LOCC}_m\) SDP hierarchy is \(0.8509\). Any extremal POVM on a qubit has at most four outcomes, so $m=4$ suffices to label the local outcome. This already lies below the adaptive one round LOCC performance at \(m=2\), whose lower bound from the \(\text{1R-LOCC}_m\) see-saw is \(0.8976\). Hence, for this ensemble, achieving the best discrimination probabilities requires adaptivity in addition to non-projective measurements.

 To conclude this example, the discrimination of the double trine ensemble demonstrates several interesting aspects. First, we see that, even in the case of qubits, the usage of non-projective measurements for the sending party can provide an advantage over projective measurements. Second, the ensemble provides an example for the certification of the advantage of adaptive LOCC protocols over all non-adaptive protocols in the sense of question (d) asked in the Introduction. 

\subsection{Two ququarts}
We can also obtain results from our method in higher-dimensional systems. Consider the following equiprobable ensemble of maximally entangled ququart–ququart ($d_{\rm A} = d_{\rm B} = 4$) states.

\begin{example}[Maximally entangled ququart-ququart states]
\begin{align}
|\psi_1\rangle &= \frac{1}{2}\bigl(|0\rangle|0\rangle + |1\rangle|1\rangle + |2\rangle|2\rangle + |3\rangle|3\rangle\bigr),\\
|\psi_2\rangle &= \frac{1}{2}\bigl(|0\rangle|3\rangle + |1\rangle|2\rangle + |2\rangle|1\rangle + |3\rangle|0\rangle\bigr),\\
|\psi_3\rangle &= \frac{1}{2}\bigl(|0\rangle|1\rangle + |1\rangle|0\rangle - |2\rangle|3\rangle - |3\rangle|2\rangle\bigr).
\end{align}
\end{example}

Using a PPT relaxation, one can certify that these states can be perfectly discriminated with PPT measurements. In fact, it is easy to see that it can be done with a 1R-LOCC protocol with two bits of communication ($m=4$): Alice and Bob could both perform a computational basis measurement.

The PPT relaxation, however, does not tell us if this strategy is optimal in the sense that there could be a strategy with $m<4$ that also allows for a perfect discrimination. Our SDP hierarchy can tell that this is not the case. In Table~\ref{tab: quart} we show the result of running the \(\text{1R-LOCC}_m\) SDP hierarchy at $k=2$ for the ququart ensemble with different amounts of communication. We see that perfect discrimination requires at least two bits of communication ($m>3$) since the upper bound to the success probability for $m=3$ is $0.9623$. Referring to the question (b) of the Introduction, the example of maximally entangled ququart-ququart states nicely showcases how the discrimination success may depend on the quantitative amount of classical information in the measurement.
\begin{table}
  \centering
\begin{tabular}{ccc}
  \toprule
  \textbf{$m$} & \textbf{\shortstack{\(\text{1R-LOCC}_m\)\\see-saw}}  & \textbf{\shortstack{\(\text{1R-LOCC}_m\)\\SDP hierarchy}}  \\
  \midrule
  2 & 0.6667 & 0.6667  \\
  3 & 0.8333  & 0.9623 \\
  4 & 1 & 1  \\
  \bottomrule
\end{tabular}
   \caption{Optimal \(\text{1R-LOCC}_m\) discrimination probability of the ququart–ququart ensemble, as a function of the message alphabet size \(m\); “see-saw” gives lower bounds, and “SDP hierarchy” gives upper bounds. The values show that the minimal amount of communication required for a perfect discrimination is given by $m=4$.}
    \label{tab: quart}
\end{table}

\section{Conclusion}
\label{sec:conclusion}

We developed a framework to study the amount of classical communication that is needed in an LOCC protocol. Our main technical contribution is a characterisation of the convex hull of one-round LOCC with a bounded message alphabet, \(\text{conv}(1\text{R-LOCC}_m)\), via constrained symmetric extensions. This yields a convergent SDP hierarchy in which the communication budget \(m\), the message direction, and the need for adaptation can be explicitly set. At low hierarchy levels the resulting SDPs are tractable, yet already tight enough to certify optimality in representative instances.

For the iso-entangled Bell-basis family, our method allowed us to detect a directional asymmetry at one bit of communication: Bob-first protocols strictly outperform Alice-first protocols. Using the example of the double trine ensemble, we demonstrated that projective strategies with classical communication are not sufficient to get the optimal success probability and general non-projective measurements provide an advantage.
We also show that our algorithm can be used in systems with higher dimensions. In the ququart-ququart example where a non-adaptive strategy can achieve perfect discrimination with two bits of communication, the hierarchy certifies that perfect performance cannot be achieved with fewer than two classical bits of communication.

Methodologically, our SPD hierarchies provide a general recipe to turn LOCC-type constraints into affine conditions inside semidefinite programs. Beyond state discrimination, the same template applies to tasks such as channel discrimination, quantum state verification, remote state preparation, state distillation, and teleportation, where message size, direction, and number of rounds matter.

There are several natural directions for future work. First, extending the present characterisation to multi-round LOCC with limited per-round message budgets would allow a systematic study of rounds versus bits trade-offs. Second, optimising the implementation of the constraints, particularly regarding the symmetric variables, could further improve the scalability.

\section{Acknowledgements}
First, we would like to express our sincere thanks to Ryszard Horodecki for his pioneering work. In
addition, we would like to thank Marco Túlio Quintino, Roope Uola and Carlos de Gois for insightful discussions. 

This work was supported by the Deutsche Forschungsgemeinschaft (DFG, German Research Foundation, project number 563437167), the 
Sino-German Center for Research Promotion 
(Project M-0294), the German Federal Ministry of Research, Technology and Space (Project QuKuK, Grant No.~16KIS1618K and Project BeRyQC, Grant No.~13N17292), the Project EIN Quantum NRW and the Swedish Research Council (Grant No. 2024-05341).
This work was partially supported by the São 
Paulo Research Foundation (Fapesp) \mbox{[2024/23590-2]} and the National Council for Scientific and Technological Development (CNPq) \mbox{[200705/2025-3]}. We thank the Coaraci Supercomputer for computer time (Fapesp grant 2019/17874-0) and the Center for Computing in Engineering and Sciences at Unicamp (Fapesp grant 2013/08293-7).

\appendix
\section{Symmetric extensions of constrained separable states}
\label{app:sym_ext}
Here, we give a quick review of the concept of constrained separability that is used in the proofs of the Theorems \ref{thm:1W_locc_m_fixed_char} and \ref{thm:na_locc_m_fixed_char} presented in Appendices \ref{ap:theorem_1} and \ref{app:thm_2_proof}, respectively. 
Recall that a positive semidefinite operator $\rho \in \textnormal{Herm}(\mathbb{C}^{{d}_{\rm A}} \otimes \mathbb{C}^{{d}_{\rm B}})$ acting on a bipartite Hilbert space is called separable if it admits a decomposition of the form 
\begin{equation}
    \label{eq:sep_deco}
    \rho = \sum_{r} q_{r} \, \sigma^{r} \otimes \gamma^{r},
\end{equation}
where $\sigma^{r} \in \textnormal{Herm}(\mathbb{C}^{d_{\rm A}})$ and $\gamma^{r} \in \textnormal{Herm}(\mathbb{C}^{d_{\rm B}})$ are positive semidefinite operators and $q_{r}$ are positive numbers. 
Constrained separability, defined as follows, is a requirement one can impose on a positive semidefinite operator that is in general stricter than usual separability.

\begin{Definition}[Constrained separable states]
    \label{def:con_sep}
    Let $d_{\rm A}, d_{\rm B} \in \mathbb{N}$ and let $\Phi:\textnormal{Herm}(\mathbb{C}^{d_{\rm A}}) \rightarrow V_{\rm A}$ and $\Psi:\textnormal{Herm}(\mathbb{C}^{d_{\rm B}}) \rightarrow V_{\rm B}$ be linear maps with real target vector spaces $V_{\rm A}$ and $V_{\rm B}$. Let furthermore $v_{\rm A} \in V_{\rm A}$ and $v_{\rm B} \in V_{\rm B}$ be some constant vectors. 
    Then a positive semidefinite operator $\rho \in \textnormal{Herm}(\mathbb{C}^{{d}_{\rm A}} \otimes \mathbb{C}^{{d}_{\rm B}})$ is called \textit{constrained separable} regarding $(\Phi,v_{\rm A},\Psi, v_{\rm B})$ if it can be decomposed as 
    \begin{equation}
        \rho = \sum_{r} q_{r} \, \sigma^{r} \otimes \gamma^{r},
    \end{equation}
    where $\sigma^{r} \in \textnormal{Herm}(\mathbb{C}^{d_{\rm A}})$ and $\gamma^{r} \in \textnormal{Herm}(\mathbb{C}^{d_{\rm B}})$ are positive semidefinite operators with $\Tr(\sigma^{r}) = \Tr(\gamma^{r}) = 1$ that satisfy
    \begin{align}
        \Phi(\sigma^r) &= v_{\rm A}, \\
        \Psi(\gamma^r) &= v_{\rm B},
    \end{align}
    for all $r$ and $q_{r}$ are positive numbers.
\end{Definition} 

It has been shown in Ref.~\cite{Doherty2004} that separable operators as defined in Eq.~\eqref{eq:sep_deco} have the unique property that they allow the presence of so-called $k$-symmetric extensions for all $k$. There is similar concept also for the case of constrained separable states, or more generally for constrained separable operators if the normalisation $\Tr(\rho) = 1$ is not satisfied. 

\begin{Definition}[Constrained $k$-symmetric extendible states]
    Let $d_{\rm A}, d_{\rm B} \in \mathbb{N}$ and let $\Phi:\textnormal{Herm}(\mathbb{C}^{d_{\rm A}}) \rightarrow V_{\rm A}$ and $\Psi:\textnormal{Herm}(\mathbb{C}^{d_{\rm B}}) \rightarrow V_{\rm B}$ be linear maps with real target vector spaces $V_{\rm A}$ and $V_{\rm B}$. Let furthermore $v_{\rm A} \in V_{\rm A}$ and $v_{\rm B} \in V_{\rm B}$ be some constant vectors. Then a positive semidefinite operator $\rho \in \textnormal{Herm}(\mathbb{C}^{{d}_{\rm A}} \otimes \mathbb{C}^{{d}_{\rm B}})$ is called constrained $k$-symmetric extendible with respect to $(\Phi,v_{\rm A},\Psi, v_{\rm B})$ if there exists a positive semidefinite operator $\bm{\Omega}_{k} \in \textnormal{Herm}({(\mathbb{C}^{d_{\rm A}})}^{\otimes k} \otimes \mathbb{C}^{d_{\rm B}})$ such that
    \begin{align}
        \rho &= \Tr_{\rm A_{1}\dots A_{k-1}}(\bm{\Omega}_{k}), \\
        \bm{\Omega}_{k} &= (U_{k,d_{\rm A}}^{\sigma} \otimes \mathds{1}_{\rm B}) \bm{\Omega}_{k} (U_{k,d_{\rm A}}^{{\sigma}^{-1}} \otimes \mathds{1}_{\rm B}) \textnormal{ for all } \sigma \in S_{k}, \\
        [\Phi \otimes \textnormal{id}_{\rm A_2 \dots A_{k}B}] \bm{\Omega}_{k} &= v_{\rm A} \otimes \Tr_{\rm A_1}(\bm{\Omega}_{k}), \\
        [\textnormal{id}_{\rm A_{1}\dots A_{k}} \otimes \Psi] \bm{\Omega}_{k} &= \Tr_{\rm B}(\bm{\Omega}_{k}) \otimes v_{\rm B}, 
    \end{align}
    where $U_{k,d}^{\sigma}$ is the standard unitary representation of the symmetric group $S_k$ acting on ${(\mathbb{C}^{d})}^{\otimes k}$ and defined by
\begin{equation}
    U_{k,d}^{\sigma} \ket{i_1} \otimes \dots \otimes \ket{i_k} = \ket{i_{\sigma^{-1}(1)}} \otimes \dots \otimes \ket{i_{\sigma^{-1}(k)}}
\end{equation}
for each permutation $\sigma \in S_k$.
\end{Definition}
Another constraint that is often additionally imposed on the extension $\bm{\Omega}_{k}$ for purely practical purposes is the positivity of all partial transposes, i.e.,
\begin{equation}
    \label{eq:abstract_ppt}
    (\bm{\Omega}_{k})^{T_{\rm{A}^{\ell}}} \succcurlyeq 0 \textnormal{ for all } \ell \in \{0,\dots,k\}
\end{equation}
where $A^{\ell}$ denotes the first $\ell$ copies of the Hilbert space $\mathbb{C}^{d_{\rm A}}$. It should be noted that condition \eqref{eq:abstract_ppt} is technically not needed in the definition of constrained $k$-symmetric extendibility.

The question arises how the concepts of constrained separability and constrained symmetric extendibility are related.  
By setting the operator $\bm{\Omega}_k$ to be 
\begin{equation}
    \bm{\Omega}_k = \sum_{r} q_r \, \underbrace{\sigma^{r} \otimes \dots \otimes \sigma^{r}}_{k \textnormal{ times}} \otimes \,\gamma^{r}
\end{equation}
it can be verified that constrained separable states as in Def.~\ref{def:con_sep} are constrained $k$-symmetric extendible. Conversely, in Ref.~\cite{berta_optimization} it has been proven, see also Ref.~\cite{channel_discrimination} for an alternative proof, that also the other implication holds, meaning that a state is constrained separable if and only if it is constrained $k$-symmetric extendible for all $k$. 

\begin{Proposition}[\cite{berta_optimization}]
    \label{prop:consep_convergence}
    Let $d_{\rm A}, d_{\rm B} \in \mathbb{N}$ and let $\Phi:\textnormal{Herm}(\mathbb{C}^{d_{\rm A}}) \rightarrow V_{\rm A}$ and $\Psi:\textnormal{Herm}(\mathbb{C}^{d_{\rm B}}) \rightarrow V_{\rm B}$ be linear maps with real target vector spaces $V_{\rm A}$ and $V_{\rm B}$. Let furthermore $v_{\rm A} \in V_{\rm A}$ and $v_{\rm B} \in V_{\rm B}$ be some constant vectors. Then an operator $\rho \in \textnormal{Herm}(\mathbb{C}^{d_{\rm A}} \otimes \mathbb{C}^{d_{\rm B}})$ is constrained separable with respect to $(\Phi,v_{\rm A},\Psi, v_{\rm B})$ if and only if it is constrained $k$-symmetric extendible for all $k \in \mathbb{N}$. 
\end{Proposition}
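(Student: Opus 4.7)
The forward implication, that constrained separability implies constrained $k$-symmetric extendibility for every $k$, is direct: given $\rho = \sum_r q_r \sigma^r \otimes \gamma^r$ with $\Phi(\sigma^r) = v_{\rm A}$ and $\Psi(\gamma^r) = v_{\rm B}$, the choice
\begin{equation*}
\bm{\Omega}_k = \sum_r q_r (\sigma^r)^{\otimes k} \otimes \gamma^r
\end{equation*}
is positive, permutation-invariant across the $k$ copies of $\mathbb{C}^{d_{\rm A}}$, traces down to $\rho$ under $\Tr_{\rm A_{1}\dots A_{k-1}}$, and the two affine conditions reduce to the per-summand identities $\Phi(\sigma^r) = v_{\rm A}$ and $\Psi(\gamma^r) = v_{\rm B}$, which hold by hypothesis.

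For the converse, the plan is to apply a quantum de Finetti argument in the presence of the auxiliary affine constraints. Given extensions $\bm{\Omega}_k$ for every $k$, permutation symmetry places each $\bm{\Omega}_k$ in the symmetric subspace over the $k$ copies of $\mathbb{C}^{d_{\rm A}}$, tensored with the spectator $\rm B$ register. The finite-dimensional quantum de Finetti theorem of Christandl, K\"onig, Mitchison, and Renner then produces probability measures $\mu_k$ on normalized states of $\mathbb{C}^{d_{\rm A}}$ together with conditional states $\gamma_k(\sigma)$ on $\mathbb{C}^{d_{\rm B}}$ such that the marginal $\Tr_{\rm A_{1}\dots A_{k-1}}(\bm{\Omega}_k) = \rho$ is approximated in trace distance by
\begin{equation*}
\int d\mu_k(\sigma)\, \sigma \otimes \gamma_k(\sigma)
\end{equation*}
with error tending to zero as $k \to \infty$. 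Prokhorov compactness on the space of probability measures over the compact set of normalized states of $\mathbb{C}^{d_{\rm A}}$, combined with a measurable-selection step for the assignment $\gamma_k$, extracts a convergent subsequence whose limit gives an exact separable decomposition of $\rho$.

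The main obstacle is to show that the limiting decomposition also inherits the affine constraints. The key observation is that the requirement $[\Phi \otimes \mathrm{id}_{\rm A_2 \dots A_k B}] \bm{\Omega}_k = v_{\rm A} \otimes \Tr_{\rm A_1}(\bm{\Omega}_k)$ is linear in $\bm{\Omega}_k$, so coupling it with the de Finetti approximation yields that
\begin{equation*}
\int d\mu_k(\sigma)\, [\Phi(\sigma) - v_{\rm A}] \otimes \sigma^{\otimes(k-1)} \otimes \gamma_k(\sigma)
\end{equation*}
is small in norm. Pairing this against suitable test operators and invoking continuity of $\Phi$ on the compact state space forces $\mu_k$ to concentrate asymptotically on $\{\sigma : \Phi(\sigma) = v_{\rm A}\}$, and an analogous argument handles the $\Psi$-constraint on the $\rm B$ side. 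Taking the weak limit of $\mu_k$ along the convergent subsequence, together with the corresponding measurable selection for $\gamma$, produces a separable decomposition of $\rho$ whose constituents satisfy $\Phi(\sigma^r) = v_{\rm A}$ and $\Psi(\gamma^r) = v_{\rm B}$ exactly, closing the equivalence.
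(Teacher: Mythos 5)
The paper itself does not supply a proof of this proposition: it is imported verbatim from Ref.~\cite{berta_optimization}, and the surrounding text only exhibits the easy direction via the ansatz $\bm{\Omega}_k=\sum_r q_r\,(\sigma^r)^{\otimes k}\otimes\gamma^r$. Your first paragraph reproduces that construction correctly (the normalisation $\Tr(\sigma^r)=1$ is what makes the partial trace collapse to $\rho$, and the affine conditions hold summand by summand). Your converse follows the same general architecture as the cited reference — a quantum de~Finetti theorem with a spectator $\rm B$ system, followed by weak-$*$ compactness of the approximating measures — so the strategy is sound.

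The gap is in the step where you transfer the affine constraints to the limit. The smallness of $\int d\mu_k(\sigma)\,[\Phi(\sigma)-v_{\rm A}]\otimes\sigma^{\otimes(k-1)}\otimes\gamma_k(\sigma)$ is a \emph{first-moment} statement, and by itself it does not force $\mu_k$ to concentrate on $\{\sigma:\Phi(\sigma)=v_{\rm A}\}$: for a fixed functional $w\in V_{\rm A}^{*}$ the integrand $\langle w,\Phi(\sigma)-v_{\rm A}\rangle$ can change sign, and contributions from the two sign regions can cancel when weighted against a fixed test operator. "Pairing against suitable test operators" is precisely the nontrivial content and must be made explicit. Two standard repairs exist. (i) Use permutation symmetry to apply the constraint to \emph{two} $\rm A$-copies simultaneously, giving $\int d\mu_k\,\langle w,\Phi(\sigma)-v_{\rm A}\rangle^{2}\to 0$ for every $w$; a second moment vanishing does force concentration. (ii) Pair the first-moment identity against operators $X$ supported on $\ell$ of the remaining copies, observe that $\sigma\mapsto\Tr[X\sigma^{\otimes\ell}]$ realises all polynomials in the entries of $\sigma$ as $\ell$ grows, and invoke Stone--Weierstrass to conclude $\int g(\sigma)\langle w,\Phi(\sigma)-v_{\rm A}\rangle\,d\mu=0$ for all continuous $g$, hence $\Phi(\sigma)=v_{\rm A}$ $\mu$-almost everywhere; the order of limits ($k\to\infty$ for each fixed $\ell$ and $X$ first, then density) must be respected. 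The $\Psi$-constraint on the $\rm B$ side needs the same treatment, and it is cleaner to carry a single measure on pairs $(\sigma,\gamma)$ over the compact product of the two state spaces through the weak limit than to argue via a separate measurable selection for $\gamma_k$.
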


Mathematically, Proposition \ref{prop:consep_convergence} is the key ingredient for the proofs of Theorem \ref{thm:1W_locc_m_fixed_char} and \ref{thm:na_locc_m_fixed_char}. 

\section{Proof of Theorem \ref{thm:1W_locc_m_fixed_char}}
\label{ap:theorem_1}
Theorem \ref{thm:1W_locc_m_fixed_char} stated the following.
\begin{theoremcopy}[$1\textnormal{R}-\textnormal{LOCC}_{m}$ hierarchy]
    A bipartite measurement $\{M^{\lambda}\}_{\lambda = 1}^{n} \subset \textnormal{Herm}(\mathbb{C}^{d_{\rm A}} \otimes \mathbb{C}^{d_{\rm B}})$ lies in the convex hull of $1\textnormal{R}-\textnormal{LOCC}_{m}$ if and only if for each level $k \in \mathbb{N}$ there exists an array of positive semidefinite operators $R^{\bm{a}\lambda b} \in \textnormal{Herm}(\mathbb{C}^{d_{\rm A}^{k}} \otimes \mathbb{C}^{d_{\rm B}})$, for $\bm{a} = (a_1,\dots, a_k) \in \{1,\dots,m\}^{\times k}$, $\lambda \in \{1,\dots,n\}$ and $b \in \{1,\dots,m\}$, that satisfies the following constraints.
    \begin{align}
        & M^{\lambda} = \frac{1}{d_{\rm A}^{k-1}}\sum_{a_{1}\dots,a_{k} = 1}^{m} \Tr_{\rm{A}_{1}\dots\rm{A}_{k-1}}(R^{a_{1}\dots a_{k}\lambda a_{k}}) \textit{ for all } \lambda \label{eq:1W_cond_start_copy}\\
        & \sum_{a_1=1}^{m} R^{a_1\dots a_k \lambda b} = \frac{\mathds{1}_{\rm A}}{d_{\rm A}} \otimes \sum_{a_1 = 1}^{m}\Tr_{\rm A_{1}}( R^{a_1\dots a_k \lambda b}) \textit{ for all } a_2,\dots, a_k, \lambda, b \label{eq:left_array_cons_copy}\\
        & \sum_{\lambda = 1}^{n} R^{\bm{a}\lambda b} = \sum_{\lambda=1}^{n} \sum_{b'=1}^{m} \Tr_{\rm B}(R^{\bm{a}\lambda b'}) \otimes \frac{\mathds{1}_{\rm B}}{md_{\rm B}} \textit{ for all } \bm{a},b \label{eq:right_array_cons_copy}\\
        & (U_{k,d_{\rm A}}^{\sigma} \otimes \mathds{1}_{\rm B}) \,  R^{\bm{a}\lambda b} \, (U_{k,d_{\rm A}}^{{\sigma}^{-1}} \otimes \mathds{1}_{\rm B}) = R^{\sigma (\bm{a}) \lambda b} \textit{ for all } \sigma, \bm{a}, \lambda, b \label{eq:array_symmetry_copy}\\
        & {(R^{\bm{a}\lambda b})}^{T_{\rm A^{\ell}}} \succcurlyeq 0 \textit{ for all } \bm{a}, \lambda, b \textit{ and } \ell \in \{0,\dots,k\} \label{eq:array_ppt_copy}\\
        & \sum_{a_1 \dots a_k=1}^{m} \sum_{\lambda = 1}^{n}\Tr( R^{\bm{a}\lambda b}) = d_{\rm A}^k d_{\rm B} \textit{ for all } b. \label{eq:1W_cond_end_copy}
    \end{align}
    In Eq.~\eqref{eq:array_ppt_copy}, $X^{T_{\rm A^{\ell}}}$ denotes the partial transpose applied on the first $\ell$ copies of the system $\rm{A}$ with Hilbert space $\mathbb{C}^{d_{\rm A}}$ and in Eq.~\eqref{eq:array_symmetry_copy}, $\sigma(\bm{a})$ is shorthand notation for the tuple $(a_{\sigma^{-1}(1)},\dots,a_{\sigma^{-1}(k)})$. 
   \end{theoremcopy}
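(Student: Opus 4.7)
The plan is to reduce Theorem~\ref{thm:1W_locc_m_fixed_char} to the convergence statement for constrained symmetric extensions (Proposition~\ref{prop:consep_convergence}) by embedding the one-round LOCC structure into a constrained separability problem on an enlarged bipartite Hilbert space with auxiliary classical registers.

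I would first set up the target problem. Take $\tilde{\mathcal{H}}_{\rm A} = \mathbb{C}^{d_{\rm A}} \otimes \mathbb{C}^m$ with the extra factor $A'$ acting as a classical register for the label $a$, and $\tilde{\mathcal{H}}_{\rm B} = \mathbb{C}^{d_{\rm B}} \otimes \mathbb{C}^n \otimes \mathbb{C}^m$ with classical registers $B'$ and $B''$ for the labels $\lambda$ and $b$. Define an affine map $\Phi$ on $\mathcal{L}(\tilde{\mathcal{H}}_{\rm A})$ that records the off-diagonal blocks on $A'$ together with the $A$-marginal (with targets $0$ and $\mathds{1}_{\rm A}/d_{\rm A}$), and analogously a map $\Psi$ on $\mathcal{L}(\tilde{\mathcal{H}}_{\rm B})$ demanding block-diagonality on $B' \otimes B''$ and, for every $b$, that the $b$-block of $\Tr_{B'}\gamma$ equal $\mathds{1}_{\rm B}/(m d_{\rm B})$. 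These constraints force every factor in a constrained separable decomposition into the form $\sigma^r = \tfrac{1}{d_{\rm A}}\sum_a A^a_r \otimes |a\rangle\langle a|$ with $\{A^a_r\}_a$ a POVM and $\gamma^r = \tfrac{1}{m d_{\rm B}}\sum_{\lambda,b} B^{\lambda|b}_r \otimes |\lambda\rangle\langle\lambda| \otimes |b\rangle\langle b|$ with $\{B^{\lambda|b}_r\}_\lambda$ a POVM for every $b$, and the target measurement is recovered from $\rho = \sum_r q_r \sigma^r \otimes \gamma^r$ by the continuous linear map $M^\lambda = m d_{\rm A} d_{\rm B} \sum_a \langle a,\lambda,a|_{A'B'B''}\rho|a,\lambda,a\rangle_{A'B'B''} = \sum_r q_r \sum_a A^a_r \otimes B^{\lambda|a}_r$, which is exactly the $\textnormal{conv}(1\text{R-LOCC}_m)$ representation we seek.

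Next I would establish the dictionary between the level-$k$ array and constrained $k$-symmetric extensions. Defining
\[
\Omega_k = \frac{1}{m d_{\rm A}^k d_{\rm B}} \sum_{\bm{a},\lambda,b} R^{\bm{a}\lambda b} \otimes |\bm{a}\rangle\langle\bm{a}|_{(A')^{\otimes k}} \otimes |\lambda\rangle\langle\lambda|_{B'} \otimes |b\rangle\langle b|_{B''},
\]
a direct computation shows that \eqref{eq:array_symmetry_copy} becomes the permutation invariance of $\Omega_k$ on the $k$ copies of $\tilde{\mathcal{H}}_{\rm A}$, \eqref{eq:left_array_cons_copy} becomes $[\Phi \otimes \mathrm{id}]\Omega_k = v_{\rm A} \otimes \Tr_{\tilde{A}_1}\Omega_k$, \eqref{eq:right_array_cons_copy} becomes $[\mathrm{id} \otimes \Psi]\Omega_k = \Tr_{\tilde{B}}\Omega_k \otimes v_{\rm B}$, \eqref{eq:array_ppt_copy} lifts to PPT conditions on $\Omega_k$, and \eqref{eq:1W_cond_end_copy} together with \eqref{eq:right_array_cons_copy} normalises $\Tr\Omega_k = 1$. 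Hence the theorem's constraints at level $k$ are equivalent to $\Omega_k$ being a PPT constrained $k$-symmetric extension of the reduced state $\rho_k = \Tr_{\tilde{A}_2 \dots \tilde{A}_k}\Omega_k$, and \eqref{eq:1W_cond_start_copy} identifies $M^\lambda$ with the same extraction map applied to $\rho_k$. The ``only if'' direction is then immediate: for any convex decomposition $\{M^\lambda\} = \sum_r q_r \sum_a A^a_r \otimes B^{\lambda|a}_r$ set $R^{\bm{a}\lambda b} = \sum_r q_r A^{a_1}_r \otimes \dots \otimes A^{a_k}_r \otimes B^{\lambda|b}_r$ and verify \eqref{eq:1W_cond_start_copy}--\eqref{eq:1W_cond_end_copy} by substitution.

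For the converse I would use a compactness and diagonal subsequence argument. Let $\mathcal{E}_k$ denote the compact set of states on $\tilde{\mathcal{H}}_{\rm A} \otimes \tilde{\mathcal{H}}_{\rm B}$ obeying the affine constraints and admitting a PPT constrained $k$-symmetric extension; these form a decreasing chain $\mathcal{E}_1 \supseteq \mathcal{E}_2 \supseteq \dots$ whose intersection equals the set of constrained separable states by Proposition~\ref{prop:consep_convergence}. The hypothesis furnishes a sequence $\rho_k \in \mathcal{E}_k$ all reducing to the same $\{M^\lambda\}$; after extracting a convergent subsequence $\rho_{k_j} \to \rho_*$ by compactness and using closedness of each $\mathcal{E}_j$ together with $\rho_{k_i} \in \mathcal{E}_j$ for all $k_i \geq j$, the limit $\rho_*$ belongs to $\bigcap_j \mathcal{E}_j$ and is therefore constrained separable; continuity of the extraction map preserves $\{M^\lambda\}$ in the limit, and the resulting separable decomposition of $\rho_*$ yields the desired $\textnormal{conv}(1\text{R-LOCC}_m)$ expression. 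The main obstacle is the bookkeeping in the dictionary step: identifying the precise affine maps $(\Phi, v_{\rm A}, \Psi, v_{\rm B})$ so that the indexed constraints \eqref{eq:left_array_cons_copy}--\eqref{eq:right_array_cons_copy} translate exactly into the abstract conditions, and using the permutation invariance \eqref{eq:array_symmetry_copy} to make the identification $b = a_k$ in \eqref{eq:1W_cond_start_copy} well-defined independently of which $\tilde{A}$-copy is kept after the partial trace.
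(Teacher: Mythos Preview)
Your proposal is correct and follows essentially the same approach as the paper: both embed the one-round LOCC structure into a constrained separability problem on $(\mathbb{C}^{d_{\rm A}}\otimes\mathbb{C}^m)\otimes(\mathbb{C}^{d_{\rm B}}\otimes\mathbb{C}^n\otimes\mathbb{C}^m)$, assemble the $R^{\bm{a}\lambda b}$ into a block-diagonal $\bm{\Omega}_k$, and invoke Proposition~\ref{prop:consep_convergence}. The one place you go beyond the paper is the compactness/diagonal-subsequence argument in the converse direction---since the level-$k$ arrays need not reduce to a common intermediate state $\bm{X}$, Proposition~\ref{prop:consep_convergence} does not apply verbatim to a single fixed $\rho$; the paper applies the proposition directly and leaves this point implicit, whereas you close it explicitly.
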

        \begin{proof}[Proof of Theorem \ref{thm:1W_locc_m_fixed_char}]
            First assume that $\{M^{\lambda}\}_{\lambda = 1}^{n} \in \textnormal{conv}(1\textnormal{R}-\textnormal{LOCC}_{m})$. By definition, this means that we can decompose each measurement operator as 
        \begin{equation}
            M^{\lambda} = \sum_{r} \sum_{a = 1}^{m} q_{r} \, A_{r}^{a} \otimes B_{r}^{\lambda|a}
        \end{equation}
        where $\{q_r\}_r$ is some probability distribution, $\{A_{r}^{a}\}_{a=1}^{m}$ is a measurement for each $r$ and $\{B_{r}^{\lambda|a}\}_{\lambda=1}^{n}$ is a measurement for each $r$ and $a$.  It can then be checked directly that the operator $R^{\bm{a} \lambda b}$ given by 
        \begin{equation}
            R^{\bm{a} \lambda b} = \sum_{r} q_{r} \, A_{r}^{a_1} \otimes \dots \otimes A_{r}^{a_{k}} \otimes B_{r}^{\lambda|b}
        \end{equation}
        satisfies the conditions \eqref{eq:1W_cond_start_copy}-\eqref{eq:1W_cond_end_copy}. This completes the first direction of the theorem. 

        To demonstrate the opposite direction, we first reformulate the definition of the set $\textnormal{conv}(1\textnormal{R}-\textnormal{LOCC}_{m})$ in a convenient way. A measurement $\{M^{\lambda}\}_{\lambda = 1}^{n} \in \textnormal{conv}(1\textnormal{R}-\textnormal{LOCC}_{m})$ can be identified by the block-diagonal operator $\bm{M} \in \textnormal{Herm}(\mathbb{C}^{n} \otimes \mathbb{C}^{d_{\rm A}} \otimes \mathbb{C}^{d_{\rm B}})$ given by 
        \begin{align}
            \bm{M} &= \sum_{\lambda = 1}^{n} \ketbra{\lambda}_n \otimes M^{\lambda} \\
            &= \sum_{\lambda = 1}^{n} \ketbra{\lambda}_n \otimes \sum_{r} \sum_{a=1}^{m} q_{r} \, A_{r}^{a} \otimes B_{r}^{\lambda|a} \\
            &= \leftindex_{m,m'}{\bra{\bm{\Phi}^+}} \sum_{r} q_{r} \, \bm{A}^r \otimes \bm{B}^{\rm r} \ket{\bm{\Phi}^+}_{m,m'}
        \end{align}
        where the last step uses the identity $\abs{\bra{\bm{\Phi}^+}\ket{a,b}}^2 = \delta_{ab}$ using the unnormalised maximally entangled state $\ket{\bm{\Phi}^+}_{m,m'} = \sum_{a=1}^{m} \ket{a}_{m} \otimes \ket{a}_{m'}$. The block-diagonal operators $\bm{A}^{r} \in \textnormal{Herm}(\mathbb{C}^{md_{\rm A}})$ and $\bm{B}^{r} \in \textnormal{Herm}(\mathbb{C}^{nmd_{\rm B}})$ are given by
        \begin{align}
            \bm{A}^r &= \sum_{a=1}^{m} \ketbra{a}_{m} \otimes A_{r}^{a} \\
            \bm{B}^r &= \sum_{\lambda = 1}^{n} \sum_{b=1}^{m} \ketbra{\lambda}_{n} \otimes \ketbra{b}_{m'} \otimes B_{r}^{\lambda|b}. 
        \end{align}
        The measurement operators $M^\lambda$ can be obtained from $\bm{M}$ simply via $M^\lambda = \leftindex_{n}{\bra{\lambda}} \bm{M} \ket{\lambda}_{n}$. From this reformulation we can see that the convex hull of one-round LOCC measurements with bounded communication can be identified with particular separable quantum states $\bm{X} \in \textnormal{Herm}(\mathbb{C}^{md_{\rm A}} \otimes \mathbb{C}^{nmd_{\rm B}})$ of the form 
        \begin{equation}
            \label{eq:sep_deco_meas}
            \bm{X} = \sum_{r} q_{r} \, \bm{A}^{r} \otimes \bm{B}^{r}
        \end{equation}
        where $\bm{A}^{r} \in \textnormal{Herm}(\mathbb{C}^{md_{\rm A}})$ and $\bm{B}^{r} \in \textnormal{Herm}(\mathbb{C}^{nmd_{\rm B}})$ are positive semidefinite operators that additionally obey the affine constraints
        \begin{align}
            \Tr_{m}(\bm{A}^{r}) &= \sum_{a} A_{r}^{a} = \mathds{1}_{\rm A} \label{eq:locc_con_A}\\
            \Tr_{n}(\leftindex_{m'}{\bra{b'}} \bm{B}^r \ket{b'}_{m'}) &= \sum_{\lambda = 1}^{n} B_{r}^{\lambda|b'} = \mathds{1}_{\rm B} \textnormal{ for all } b'. \label{eq:locc_con_B}
        \end{align}
        Hence, generally speaking, $\bm{X}$ is a bipartite separable quantum state where the tensor factors in its separable decomposition \eqref{eq:sep_deco_meas} are constrained to obey the additional affine constraints \eqref{eq:locc_con_A} and \eqref{eq:locc_con_B}. In regard of the framework explained in Appendix \ref{app:sym_ext}, we refer to such states as constrained separable states.  

        Importantly, by Proposition \ref{prop:consep_convergence} $\bm{X}$ is a constrained separable state if and only if it is constrained $k$-symmetric extendible, meaning that for each $k \in \mathbb{N}$ there exists a positive semidefinite operator $\bm{\Omega}_{k} \in \textnormal{Herm}(\mathbb{C}^{(md_{\rm A})^k} \otimes \mathbb{C}^{nmd_{\rm B}})$
        that satisfies the following properties.
        \begin{align}
            \bm{\Omega}_k  &= (U_{k,md_{\rm A}}^{\sigma} \otimes \mathds{1}_{nmB})\, \bm{\Omega}_k \, (U_{k,md_{\rm A}}^{\sigma^{-1}}\otimes \mathds{1}_{nmB}) \textnormal{ for all } \sigma \in S_k \label{eq:con_sym_1}\\
            \Tr_{m_1} (\bm{\Omega}_k) &= \frac{\mathds{1}_{\rm A}}{d_{\rm A}} \otimes \Tr_{m_1 \rm{A}_1}(\bm{\Omega}_k) \label{eq:con_sym_2}\\
            \Tr_{n}(\leftindex_{m}{\bra{b}}\bm{\Omega}_k \ket{b}_{m}) &= \Tr_{nm\rm{B}}(\bm{\Omega}_{k}) \otimes \frac{\mathds{1}_{\rm B}}{md_{\rm B}}  \textnormal{ for all } b.  \label{eq:con_sym_3} 
        \end{align}
        If for every $k \in \mathbb{N}$ there is an operator $\bm{\Omega}_k$ that satisfies the properties \eqref{eq:con_sym_1}-\eqref{eq:con_sym_3} above and is such that 
        \begin{equation}
             \label{eq:effect_from_sym_ext}
            M^{\lambda} = \frac{1}{d_{\rm A}^{k-1}} \, \leftindex_{n}{\bra{\lambda}} \leftindex_{m_{k},m'}{\bra{\bm{\Phi}^+}}  \Tr_{m_{1}\rm{A}_1 \dots m_{k-1}\rm{A}_{k-1}}(\bm{\Omega}_k) \ket{\bm{\Phi}^+}_{m_{k},m'} \ket{\lambda}_{n},
        \end{equation}
        then $\{M^{\lambda}\}_{\lambda = 1}^{n} \in \textnormal{conv}(1\textnormal{R}-\textnormal{LOCC}_{m})$. The idea is now that $\bm{\Omega}_k$ can be understood as the block-diagonal operator that has all the operators $R^{\bm{a}\lambda b}$ on its diagonal. Explicitly, based on the assumed existence of operators $R^{\bm{a}\lambda b}$, we can construct the constrained symmetric extension $\bm{\Omega}_k$ as 
        \begin{equation}
            \label{eq:con_sym_ext_ansatz}
            \bm{\Omega}_k = \sum_{a_1\dots a_k = 1}^{m} \sum_{\lambda = 1}^{n} \sum_{b=1}^{m} \ketbra{\bm{a}}_{m_{1}\dots m_{k}} \otimes \ketbra{\lambda}_{n} \otimes \ketbra{b}_{m'} \otimes R^{\bm{a} \lambda b} 
        \end{equation}
        which directly satisfies the extension property of Eq.~\eqref{eq:effect_from_sym_ext} due to the assumptions of Eq.~\eqref{eq:1W_cond_start_copy} and \eqref{eq:1W_cond_end_copy}. We now check that $\bm{\Omega}_k$ is a proper constrained symmetric extension by verifying the equations \eqref{eq:con_sym_1}-\eqref{eq:con_sym_3}.  The property of Eq.~\eqref{eq:con_sym_1} can be shown by a straightforward calculation via 
        \begin{align}
            &(U_{k,md_{\rm A}}^{\sigma} \otimes \mathds{1}_{nm\rm{B}})\, \bm{\Omega}_k \, (U_{k,md_{\rm A}}^{\sigma^{-1}}\otimes \mathds{1}_{nm\rm{B}})\\ &= \sum_{a_1\dots a_k = 1}^{m} \sum_{\lambda = 1}^{n} \sum_{b=1}^{m} \ketbra{\sigma(\bm{a})} \otimes \ketbra{\lambda} \otimes \ketbra{b} \otimes (U_{k,d_{\rm A}}^{\sigma} \otimes \mathds{1}_{\rm B}) R^{\bm{a} \lambda b} (U_{k,d_{\rm A}}^{\sigma^{-1}} \otimes \mathds{1}_{\rm B}) \\
            &\underbrace{=}_{\textnormal{Eq.}~\eqref{eq:array_symmetry_copy}} \sum_{a_1\dots a_k = 1}^{m} \sum_{\lambda = 1}^{n} \sum_{b=1}^{m} \ketbra{\sigma(\bm{a})} \otimes \ketbra{\lambda} \otimes \ketbra{b} \otimes R^{\sigma(\bm{a}) \lambda b} = \bm{\Omega}_k.
        \end{align}
        Similarly, Eq.~\eqref{eq:con_sym_2} can be verified as follows:
        \begin{align}
            \Tr_{m_1}(\bm{\Omega}_k) &= \sum_{a_2\dots a_k = 1}^{m} \sum_{\lambda = 1}^{n} \sum_{b=1}^{m} \ketbra{a_2\dots a_k} \otimes \ketbra{\lambda} \otimes \ketbra{b} \otimes \sum_{a_{1}=1}^{m} R^{\bm{a} \lambda b} \\
            &\underbrace{=}_{\textnormal{Eq.}~\eqref{eq:left_array_cons_copy}} \sum_{a_2\dots a_k = 1}^{m} \sum_{\lambda = 1}^{n} \sum_{b=1}^{m} \ketbra{a_2\dots a_k} \otimes \ketbra{\lambda} \otimes \ketbra{b} \otimes \frac{\mathds{1}_{\rm A}}{d_{\rm A}} \otimes \sum_{a_1 = 1}^{m}\Tr_{\rm A_{1}}( R^{\bm{a} \lambda b}) \\
            &= \frac{\mathds{1}_{\rm A}}{d_{\rm A}} \otimes \Tr_{m_1 \rm{A}_1}(\bm{\Omega}_k).
        \end{align}
        Finally, also Eq.~\eqref{eq:con_sym_3} holds true as 
        \begin{align}
            \Tr_{n}(\leftindex_{m'}{\bra{b}}\bm{\Omega}_k \ket{b}_{m'}) &= \sum_{a_1\dots a_k = 1}^{m} \sum_{\lambda = 1}^{n} \Tr_{n}(\ketbra{\bm{a}}_{m_{1}\dots m_{k}} \otimes \ketbra{\lambda}_{n}  \otimes R^{\bm{a} \lambda b}) \\
            &= \sum_{a_1\dots a_k = 1}^{m} \sum_{\lambda = 1}^{n} \ketbra{\bm{a}}_{m_{1}\dots m_{k}} \otimes R^{\bm{a} \lambda b} \\
             &\underbrace{=}_{\textnormal{Eq.}~\eqref{eq:right_array_cons_copy}} \sum_{a_1\dots a_k = 1}^{m} \sum_{\lambda = 1}^{n} \sum_{b'=1}^{m} \ketbra{\bm{a}}_{m_{1}\dots m_{k}} \otimes \Tr_{\rm B}(R^{\bm{a} \lambda b'}) \otimes \frac{\mathds{1}_{\rm B}}{md_{\rm B}} \\
             &= \Tr_{nm\rm{B}}(\bm{\Omega}_{k}) \otimes \frac{\mathds{1}_{\rm B}}{md_{\rm B}}.
        \end{align}
        To conclude, we have shown that the existence of operators $R^{\bm{a} \lambda b}$ that satisfy the conditions \eqref{eq:1W_cond_start_copy}-\eqref{eq:1W_cond_end_copy} implies that the operator $\bm{\Omega}_k$ defined in Eq.~\eqref{eq:con_sym_ext_ansatz} is a proper constrained symmetric extension for each $k$ and hence we have $\{M^{\lambda}\}_{\lambda = 1}^{n} \in \textnormal{conv}(1\textnormal{R}-\textnormal{LOCC}_{m})$.
        \end{proof}

\section{Proof of Theorem \ref{thm:na_locc_m_fixed_char}}
\label{app:thm_2_proof}
Theorem \ref{thm:na_locc_m_fixed_char} stated the following.
\begin{theoremcopy}[$\textnormal{NA}-\textnormal{LOCC}_{m}$ hierarchy]
    A bipartite measurement $\{M^{\lambda}\}_{\lambda = 1}^{n} \subset \textnormal{Herm}(\mathbb{C}^{d_{\rm A}} \otimes \mathbb{C}^{d_{\rm B}})$ lies in the convex hull of $\textnormal{NA-LOCC}_{m}$ if and only if for each $k \in \mathbb{N}$ there exists an array of positive semidefinite operators $R^{\bm{a} \bm{b}} \in \textnormal{Herm}(\mathbb{C}^{d_{\rm A}^{k}} \otimes \mathbb{C}^{d_{\rm B}})$, for $\bm{a} = (a_1,\dots, a_k) \in \{1,\dots,m\}^{\times k}$ and $\bm{b} = (b_1, \dots, b_{m}) \in \{1,\dots,n\}^{\times m}$, that satisfies the following properties:
    \begin{align}
        &M^{\lambda} = \frac{1}{d_{\rm A}^{k-1}} \sum_{a_1, \dots, a_k = 1}^{m} \ \sum_{b_1,\dots,b_{m} = 1}^{n} \delta_{b_{a_{k}}, \lambda} \, \Tr_{\rm{A}_1 \dots \rm{A}_{k-1}}(R^{\bm{a} \bm{b}}) \textit{ for all } \lambda \label{eq:na_eff_red_copy}\\
        &\sum_{a_1=1}^{m} R^{a_1\dots a_k \bm{b}} = \frac{\mathds{1}_{\rm A}}{d_{\rm A}} \otimes \sum_{a_1 = 1}^{m}\Tr_{\rm A_{1}}( R^{a_1\dots a_k \bm{b}}) \textit{ for all } a_2,\dots, a_k, \bm{b} \\
        & \sum_{b_1, \dots, b_m = 1}^{n} R^{\bm{a}\bm{b}} = \sum_{b_1, \dots, b_m = 1}^{n} \Tr_{\rm B}(R^{\bm{a}\bm{b}}) \otimes \frac{\mathds{1}_{\rm B}}{d_{\rm B}} \textit{ for all } \bm{a} \\
        & (U_{k,d_{\rm A}}^{\sigma} \otimes \mathds{1}) \,  R^{\bm{a} \bm{b}} \, (U_{k,d_{\rm A}}^{{\sigma}^{-1}} \otimes \mathds{1}_{B}) = R^{\sigma(\bm{a}) \bm{b}} \textit{ for all } \sigma, \bm{a}, \bm{b} \\
        & {(R^{\bm{a}\bm{b}})}^{T_{\rm A^{\ell}}} \succcurlyeq 0 \textit{ for all } \bm{a}, \bm{b}, \ell \in \{0,\dots,k\} \\
        &\sum_{a_1 \dots a_k=1}^{m} \sum_{b_{1} \dots b_{m} = 1}^{n}\Tr(R^{\bm{a} \bm{b}}) = d_{\rm A}^k d_{\rm B}. \label{eq:na_norm_copy}
    \end{align}
\end{theoremcopy}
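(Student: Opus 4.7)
The proof plan closely follows that of Theorem~\ref{thm:1W_locc_m_fixed_char}, with one conceptual modification on Bob's side: instead of parametrising his conditional POVMs $\{B^{\lambda|a}\}$ directly, I will encode them through a parent POVM $\{S^{\bm{b}}\}_{\bm{b}\in\{1,\dots,n\}^m}$ via the joint-measurability relation $B^{\lambda|a} = \sum_{\bm{b}} \delta_{b_a, \lambda} S^{\bm{b}}$ recalled just before the theorem. With this reparametrisation, the whole argument is essentially a translation of the Theorem~\ref{thm:1W_locc_m_fixed_char} proof, using the constrained $k$-symmetric extension machinery of Appendix~\ref{app:sym_ext}.

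For the forward direction, starting from a convex combination of elements of $\textnormal{NA-LOCC}_m$ of the form of Eq.~\eqref{eq:1r_joint_meas} with a parent POVM $\{S_r^{\bm{b}}\}$ chosen for each summand $r$, I will propose the witness
\begin{equation}
R^{\bm{a} \bm{b}} = \sum_{r} q_{r} \, A_{r}^{a_1} \otimes \cdots \otimes A_{r}^{a_k} \otimes S_{r}^{\bm{b}}
\end{equation}
and verify each constraint \eqref{eq:na_eff_red_copy}--\eqref{eq:na_norm_copy} by direct computation. Positivity, the permutation symmetry, and the PPT conditions are immediate from the product form; the normalisation and the two affine identities follow from $\sum_{a} A_r^a = \mathds{1}_{\rm A}$ and $\sum_{\bm{b}} S_r^{\bm{b}} = \mathds{1}_{\rm B}$; the extraction identity \eqref{eq:na_eff_red_copy} reduces to $\sum_a A_r^a \otimes \tilde{B}_r^{\lambda|a}$ after absorbing the prefactor $d_{\rm A}^{-(k-1)}$ via the traces $\sum_{a_i} \Tr(A_r^{a_i}) = d_{\rm A}$ and using $\sum_{\bm{b}} \delta_{b_{a_k},\lambda} S_r^{\bm{b}} = \tilde{B}_r^{\lambda|a_k}$.

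For the backward direction, given an array $\{R^{\bm{a} \bm{b}}\}$ satisfying the hierarchy conditions, I will bundle it into the block-diagonal operator
\begin{equation}
\bm{\Omega}_k = \sum_{\bm{a}, \bm{b}} \ketbra{\bm{a}}_{m^k} \otimes \ketbra{\bm{b}}_{n^m} \otimes R^{\bm{a} \bm{b}}
\end{equation}
on $(\mathbb{C}^{m d_{\rm A}})^{\otimes k} \otimes \mathbb{C}^{n^m d_{\rm B}}$, and check that it meets the defining properties of a constrained $k$-symmetric extension with respect to the affine maps $\Phi = \Tr_m$, $v_{\rm A} = \mathds{1}_{\rm A}$ (Alice has a POVM) and $\Psi = \Tr_{n^m}$, $v_{\rm B} = \mathds{1}_{\rm B}$ (the parent operators form a POVM). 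Invoking Proposition~\ref{prop:consep_convergence} then yields a constrained separable decomposition of the marginal $\bm{X} = d_{\rm A}^{-(k-1)} \Tr_{\rm A_1 \dots A_{k-1}}(\bm{\Omega}_k)$, namely $\bm{X} = \sum_r q_r \bm{A}^r \otimes \bm{S}^r$, from which I read off Alice's POVM $\{A_r^a\}$, Bob's parent POVM $\{S_r^{\bm{b}}\}$, and finally the NA-LOCC decomposition of $\{M^\lambda\}$ via the extraction formula \eqref{eq:na_eff_red_copy}.

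The main source of friction I anticipate is simply index book-keeping: Bob's classical register now has size $n^m$ rather than $nm$, and the affine map on Bob's side must encode normalisation of the parent POVM rather than of a family of conditional POVMs. Verifying that these choices interact correctly with the Kronecker delta $\delta_{b_{a_k}, \lambda}$ in the extraction formula and faithfully reproduce the joint-measurability structure of NA-LOCC in the backward direction is the only place where the argument requires genuine care; the remaining manipulations are routine translations of those in Appendix~\ref{ap:theorem_1}.
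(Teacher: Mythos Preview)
Your proposal is correct and matches the paper's proof essentially line by line: the paper also reparametrises Bob via the parent POVM $\{S^{\bm{b}}\}$, packages the data into a block-diagonal $\bm{\Omega}_k$ on $(\mathbb{C}^{md_{\rm A}})^{\otimes k}\otimes \mathbb{C}^{n^m d_{\rm B}}$, and invokes Proposition~\ref{prop:consep_convergence} with the affine constraints $\Tr_m(\bm{A}^r)=\mathds{1}_{\rm A}$ and $\Tr_{n_1\dots n_m}(\bm{B}^r)=\mathds{1}_{\rm B}$. The only cosmetic difference is that the paper leaves the forward-direction witness implicit, whereas you spell out $R^{\bm{a}\bm{b}}=\sum_r q_r\,A_r^{a_1}\otimes\cdots\otimes A_r^{a_k}\otimes S_r^{\bm{b}}$ explicitly.
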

\begin{proof}[Proof of Theorem \ref{thm:na_locc_m_fixed_char}]
    The proof follows essentially the same logic as the proof of Theorem \ref{thm:1W_locc_m_fixed_char} presented in Appendix \ref{ap:theorem_1}. Let us consider a measurement $\{M^{\lambda}\}_{\lambda = 1}^{n} \in \text{conv}(\text{NA-LOCC}_{m})$. By definition, we can decompose the measurement operators via
    \begin{align}
            M^\lambda  &= \sum_{r} \sum_{a=1}^{m} \sum_{b_1,\dots,b_{m} = 1}^{n} q_{r}\delta_{b_a, \lambda} \, A_{r}^{a} \otimes   S_{r}^{\bm{b}} \\
            &= \sum_{r}  \sum_{a=1}^{m} q_{r} \leftindex_{m}{\bra{a}} \leftindex_{n_{a}}{\bra{\lambda}}  \Tr_{n_{1} \dots \cancel{n_{a}}\dots n_{m}}(\bm{A}^r \otimes \bm{B}^r) \ket{\lambda}_{n_{a}} \ket{a}_{m} 
        \end{align}
        where the operators $\bm{A}^r \in \textnormal{Herm}(\mathbb{C}^{m} \otimes \mathbb{C}^{d_{\rm A}})$ and $\bm{B}^r \in \textnormal{Herm}({(\mathbb{C}^{n})}^{\otimes m} \otimes \mathbb{C}^{d_{\rm B}})$ are given by 
        \begin{align}
            \bm{A}^r &= \sum_{a=1}^{m} \ketbra{a}_m \otimes A_{r}^{a}\\
            \bm{B}^r &= \sum_{b_{1}\dots b_{m} = 1}^{n} \ketbra{b_1}_{n_1} \otimes \dots \otimes \ketbra{b_{m}}_{n_{m}} \otimes S_{r}^{\bm{b}}.
        \end{align}
        These are positive semidefinite operators that are additionally subject to the affine-linear constraints 
        \begin{align}
            \Tr_{m}(\bm{A}^r) &= \sum_{a = 1}^{m} A_{r}^a = \mathds{1}_{\rm A} \\
             \Tr_{n_1 \dots n_{m}}(\bm{B}^r) &= \sum_{b_{1}\dots b_{m} = 1}^{n} S_{r}^{\bm{b}} = \mathds{1}_{\rm B}.
        \end{align}
    Hence, a measurement is in the convex hull of non-adaptive LOCC measurements if and only if there is a constrained separable operator $\bm{X} \in \textnormal{Herm}(\mathbb{C}^{md_{\rm A}} \otimes \mathbb{C}^{n^m d_{\rm B}})$ of the form 
        \begin{equation}
           \bm{X} = \sum_{r} q_{r} \, \bm{A}^{r} \otimes \bm{B}^{r}
        \end{equation}
        such that 
        \begin{equation}
            M^{\lambda} =   \sum_{a=1}^{m}  \leftindex_{m_{k}}{\bra{a}} \leftindex_{n_{a}}{\bra{\lambda}}  \Tr_{\rm n_{1} \dots \cancel{n_{a}}\dots n_{m}}(\bm{X}) \ket{\lambda}_{n_{a}} \ket{a}_{m_{k}}
        \end{equation}
        for all $\lambda$. 
        Equivalently, by using the framework presented in Appendix \ref{app:sym_ext}, these measurements can be identified with constrained $k$-symmetric extensions $\bm{\Omega}_{k} \in \textnormal{Herm}({(\mathbb{C}^{md_{\rm A}})}^{\otimes k} \otimes \mathbb{C}^{n^m d_{\rm B}})$ that satisfy 
        \begin{align}
            \bm{\Omega}_k  &= (U_{k,md_{\rm A}}^{\sigma} \otimes \mathds{1}_{n_1 \dots n_m \rm{B}})\, \bm{\Omega}_k \, (U_{k,md_{\rm A}}^{\sigma^{-1}}\otimes \mathds{1}_{n_1 \dots n_m \rm{B}}) \textnormal{ for all } \sigma \in S_k \label{eq:na_con_sym_1}\\
            \Tr_{m_1} (\bm{\Omega}_k) &= \frac{\mathds{1}_{\rm A}}{d_{\rm A}} \otimes \Tr_{m_1 \rm{A}_1}(\bm{\Omega}_k) \label{eq:na_con_sym_2}\\
            \Tr_{n_1 \dots n_{m}}(\bm{\Omega}_k) &= \Tr_{n_1 \dots n_{m} \rm{B}}(\bm{\Omega}_{k}) \otimes \frac{\mathds{1}_{\rm B}}{d_{\rm B}}  \label{eq:na_con_sym_3} 
        \end{align}
        and are chosen to produce the correct measurement operators via
        \begin{equation}
            \label{eq:na_extension_crit}
            M^{\lambda} =   \sum_{a=1}^{m}  \leftindex_{m_{k}}{\bra{a}} \leftindex_{n_{a}}{\bra{\lambda}}  \Tr_{\rm m_{1}A_{1} \dots m_{k-1}A_{k-1}n_{1} \dots \cancel{n_{a}}\dots n_{m}}(\bm{\Omega}_{k}) \ket{\lambda}_{n_{a}} \ket{a}_{m_{k}}. 
        \end{equation}
        It is now straightforward to show that the ansatz for $\bm{\Omega}_{k}$ given by 
        \begin{equation}
            \bm{\Omega}_{k} = \sum_{a_1 \dots a_k = 1}^{m} \sum_{b_{1} \dots b_{m} = 1}^{n} \ketbra{\bm{a}}_{m_1\dots m_{k}} \otimes \ketbra{\bm{b}}_{n_{1}\dots n_{m}} \otimes R^{\bm{a} \bm{b}}
        \end{equation}
        satisfies the constraints \eqref{eq:na_con_sym_1}-\eqref{eq:na_extension_crit}, given that the positive semidefinite operators $R^{\bm{a} \bm{b}}$ satisfy the constraints \eqref{eq:na_eff_red_copy}-\eqref{eq:na_norm_copy}.

\end{proof}

\printbibliography
\end{document}